\newtheorem{theorem}{Theorem}
\newtheorem{corollary}[theorem]{Corollary}
\newtheorem{definition}[theorem]{Definition}
\newtheorem{example}[theorem]{Example}
\newtheorem{lemma}[theorem]{Lemma}
\newtheorem{remark}[theorem]{Remark}
\newenvironment{proof}[1][Proof]{\textbf{#1.} }{\ \rule{0.5em}{0.5em}}
\begin{document}

\title{A Small-Gain Theorem for Discrete-Time Convergent Systems and Its Applications}

\author{Jiayin Chen and Hendra I. Nurdin \thanks{J. Chen and H. I. Nurdin are with the School of Electrical Engineering and Telecommunications,  UNSW Australia, Sydney NSW 2052, Australia (\texttt{email: jiayin.chen@unsw.edu.au, h.nurdin@unsw.edu.au})}}

\maketitle

\begin{abstract} 
Convergent, contractive or incremental stability properties of nonlinear systems have attracted interest for control tasks such as observer design, output regulation and synchronization. The convergence property plays a central role in the neuromorphic (brain-inspired) computing of reservoir computing, which seeks to harness the information processing capability of nonlinear systems. This paper presents a small-gain theorem for discrete-time output-feedback interconnected systems to be uniformly input-to-output convergent (UIOC) with outputs  converging to a bounded reference output uniquely determined by the input. A small-gain theorem for interconnected time-varying discrete-time uniform input-to-output stable systems that could be of separate interest is also presented as an intermediate result. Applications of the UIOC small-gain theorem are illustrated in the design of observer-based controllers and interconnected nonlinear classical and quantum dynamical systems (as reservoir computers) for black-box system identification.\end{abstract}

\begin{IEEEkeywords}
Convergent dynamics; Reservoir computing; Small-gain; Input-to-output stability.
\end{IEEEkeywords}

\section{Introduction}
\label{sec:intro} 
\IEEEPARstart{C}{onvergence} notions, such as incremental stability \cite{angeli2002lyapunov}, convergent dynamics \cite{pavlov2005convergent,pavlov2008convergent} and contracting dynamics \cite{lohmiller1998contraction}, impose that all solutions must ``forget'' their initial conditions and converge to each other asymptotically; also see \cite{tran2018convergence} for a survey of such convergence properties in the discrete-time setting. Such properties have found applications in observer design \cite{angeli2002lyapunov,lohmiller1998contraction}, output regulation \cite{pavlov2011steady,pavlov2006uniform} and synchronization \cite{angeli2002lyapunov,pham2007stable}.

In an independent development, efforts to go beyond the von Neumann computing architecture to imitate human capabilities in tasks that are difficult or energetically expensive on conventional digital computers, have led to the pursuit of neuromorphic (brain-inspired) computing paradigms implemented on physical hardware. Central to this is to exploit high-dimensional nonlinear dynamical systems for processing of time-varying input signals. An emerging neuromorphic paradigm is reservoir computing (RC) \cite{tanaka2019recent,nakajima2020reservoir}, which uses a fixed but otherwise almost arbitrary dynamical system, the so-called “reservoir”, to map inputs into its state-space. This paper is interested in discrete-time reservoir computers (also abbreviated as RCs) that perform causal nonlinear operations on input sequences to produce output sequences. Only a simple linear regression algorithm is required to optimize the parameters of a readout function to approximate target output sequences. In the RC paradigm, convergence (referred to as the echo-state property in the RC literature) ensures that the reservoir outputs are asymptotically independent of its initial condition, and the reservoir induces an input-output (I/O) map to approximate a target I/O map \cite{grigoryeva2018universal}. 

A prominent example of RCs are the echo state networks (ESNs), which have demonstrated  remarkable ability to predict chaotic time series \cite{jaeger2004harnessing,pathak2018model}. There is substantial interest in hardware realizations of RC for fast processing using less memory and energy, as opposed to software-based implementation on digital computers. For instance, a photonic RC achieved high-speed speech classification (million words per second) with low error rates \cite{larger2017high} and a FPGA-based RC reached an 160 MHz rate for time-series prediction \cite{canaday2018rapid}; see \cite{zhou2020reservoir,galan2020tropical,soriano2019optoelectronic} for other recent RC hardware implementations. Its extremely efficient training makes RC suitable for applications such as edge computing, in which information processing is incorporated into decentralized sensors (the `edges') to reduce computation and transmission overhead \cite{nakajima2020physical}.  For more recent developments, see \cite{tanaka2019recent,nakajima2020reservoir,nakajima2020physical}. 

Recent years have also seen the advent of noisy near-term intermediate scale quantum (NISQ) computers \cite{Preskill18}, which are not equipped with quantum error correction and can only perform a limited form of quantum computing, made available via cloud-based access from companies such as IBM \cite{IBMQ}. This has led to the proposal of  RCs that exploit nonlinear quantum dynamics (referred to as QRC) \cite{fujii2017harnessing,CN19}. Control-oriented applications of QRCs were put forward in \cite{JNY19} and Ref.~\cite{chen2020temporal} further develops a QRC scheme that can be implemented on NISQ computers and demonstrates a proof-of-principle of QRC on cloud-based IBM superconducting quantum computers \cite{IBMQ}.  This work provides further theoretical support for the application of RC and QRC for black-box system identification of discrete-time nonlinear systems, with examples developed in Section~\ref{sec:applications}. Since QRCs can have a high-dimensional underlying Hilbert space, it may be advantageous for identifying systems with a high-dimensional state-space.

This paper focuses on uniformly convergent dynamics or the uniform convergence (UC) property. For output-feedback interconnected systems, we introduce the uniform output convergence (UOC) and the uniform input-to-output convergence (UIOC) properties. Roughly speaking, a UOC system has a unique reference state solution with its reference output defined and bounded both backwards and forwards in time. All other outputs asymptotically converge to the reference output, independent of their initial conditions. The UIOC property adapts the uniform input-to-state convergence (UISC) property \cite{pavlov2005convergent} to output-feedback interconnected systems. A UIOC system is UOC and the perturbation in its reference outputs are asymptotically bounded by a nonlinear gain on the input perturbation. We present a small-gain theorem for output-feedback interconnected systems to be UIOC, and that the closed-loop system induces a well-posed I/O map in the sense of \cite{shamma1993fading}.

Our UIOC small-gain theorem is based on a small-gain theorem for time-varying discrete-time systems in the uniform input-to-output stability (UIOS) framework, also presented herein. This latter small-gain result is used to establish the UIOC small-gain theorem for interconnected time-invariant UOC systems, the time-variance arising through a change of variables. Small-gain criteria for time-varying continuous-time interconnected systems in the UIOS framework have been established \cite{jiang1997small,jiang1994small,chen2005simplified,zhu2006small}. Ref.~\cite{sontag2002small} develops a generalized small-gain theorem that  recover the previous results for specific interconnections. 
Small-gain criteria for time-invariant discrete-time systems can be found in \cite{jiang2004nonlinear,jiang2001input}. Here, we adopt the techniques in \cite{sontag2002small} to establish a UIOS small-gain theorem for output-feedback interconnected time-varying discrete-time systems. Ref.~\cite{chen2005simplified} is based on \cite[Lemma~3, Prop. 2.5]{Ylin}, which also concerns continuous-time systems, and is not immediately applicable to our setting. Furthermore, \cite{jiang2004nonlinear,jiang2001input} do not carry over to time-varying systems \cite{jiang2005remarks}. Therefore, we establish a link to bridge our setting with the continuous-time results of \cite{sontag2002small}.

Finally, we apply our results to observer-based controller design for globally Lipschitz systems \cite{ibrir2005observer,ibrir2008novel} and to design RC parameters for black-box system identification of discrete-time nonlinear systems solely based on I/O data collected from a system. Example nonlinear models for black-box system identification include NARMAX \cite{johansen1993constructing}, the Volterra series \cite{boyd1985fading} and block-oriented models \cite{giri2010block,schoukens2017identification}. The use of closed-loop structures, such as in the Wiener-Hammerstein feedback model, is motivated by modeling systems that exhibit nonlinear feedback behavior \cite{schoukens2017identification}. Here we introduce interconnected ESNs and QRCs as models with closed-loop structures. The interconnected ESNs and QRCs dynamics are arbitrary but fixed at the onset, as long as they satisfy the UIOC small-gain theorem. Only an RC output function is optimized via ordinary least squares to fit the data, making the RC approach computationally efficient.  We illustrate numerically the efficacy of this approach to model a feedback-controlled nonlinear system.

\textbf{Notation.} $\| \cdot \|$ is the Euclidean norm. $\mathbb{Z}$ $(\mathbb{R})$ and $\mathbb{Z}_{\geq k_0}$ ($ \mathbb{R}_{\geq k_0}$) denote, respectively, integers (reals) and integers (reals) larger than equal to $k_0 \in \mathbb{Z}$. $x \in \mathbb{R}^n$ is a column vector and $x^\top$ its transpose. For $x_j \in \mathbb{R}^{n_j} (j=1, 2)$, $(x_1, x_2)\in \mathbb{R}^{n_1 + n_2}$ is their concatenation into a column vector. For a sequence $z$ on $\mathbb{Z}$, $\| z_{[k_0, k]} \| \coloneqq \sup_{k_0 \leq j \leq k} \| z(j) \|$ for any $k_0 \in \mathbb{Z}$ and $k \geq k_0$. $l_n^{\infty}$ denotes the set of bounded sequences on $\mathbb{Z}$, i.e., $z \in l^{\infty}_n$ if $z(k) \in \mathbb{R}^n$ and $ \| z \|_{\infty} \coloneqq \sup_{k \in \mathbb{Z}}\|z(k)\| < \infty$. For any sequences $z_1, z_2$ on $\mathbb{Z}$, $z=(z_1, z_2)$ is given by $z(k) = (z_1(k), z_2(k))$  $\forall k \in \mathbb{Z}$. Function composition is denoted by $\circ$. We use standard comparison function classes $\mathcal{K}, \mathcal{K_{\infty}}$ and $\mathcal{KL}$ \cite{jiang1994small}\footnote{A function $\gamma: \mathbb{R}_{\geq 0} \rightarrow \mathbb{R}_{\geq 0}$ is a $\mathcal{K}$ function if it is continuous, strictly increasing and $\gamma(0) = 0$. A $\mathcal{K}$ function is a $\mathcal{K}_{\infty}$ function if it is unbounded. A $\mathcal{KL}$ function $\beta: \mathbb{R}_{\geq 0} \times \mathbb{R}_{\geq 0} \rightarrow \mathbb{R}_{\geq 0}$ is $\mathcal{K}$ in the first argument and decreasing (i.e., non-increasing) in the second argument, with $\lim_{t \rightarrow \infty}\beta(s, t) = 0$ for all $s \in \mathbb{R}_{\geq 0}$. As in \cite{jiang1994small,lin1996smooth}, we do not require a $\mathcal{KL}$ function to be continuous or strictly decreasing in the second argument.}.

\section{Stability concepts}
\label{sec:stability-concepts}
This section defines the uniform output convergence (UOC) (Def.~\ref{defn:uoc}) and the uniform input-to-output convergence (UIOC) (Def.~\ref{defn:uioc}) properties. See Table~\ref{table1:acronym} for a summary of relevant stability definitions. We first set some preliminaries.

For $k \in \mathbb{Z}$, consider a time-varying discrete-time system, 
\begin{equation} \label{eq:DT-sys}
\begin{cases}
x(k+1) = f(k, x(k), u(k)), \\
\hspace{1.8em} y(k) = h(k, x(k), u(k)),
\end{cases}
\end{equation}
where $x(k) \in \mathbb{R}^{n_x}$ is the state, $u(k) \in \mathbb{R}^{n_u}$ is the input and $y(k) \in \mathbb{R}^{n_y}$ is the output. We assume that $u \in l^{\infty}_{n_u}$ and for each $k \in \mathbb{Z}$, $\| f(k, x(k), u(k)) \| < \infty$ and $\| h(k, x(k), u(k)) \| < \infty$. These conditions ensure that the system is non-singular at any time and for any initial condition.

The following definition of UOC adapts the uniform convergence (UC) property \cite[Def.~3]{tran2018convergence} to systems with output of the form \eqref{eq:DT-sys}.

\begin{definition} \label{defn:steady-state}
For any $u \in l_{n_u}^{\infty}$, a solution $x^*(k)$ to \eqref{eq:DT-sys} and its corresponding output $y^*(k) = h(k, x^*(k), u(k))$ are a reference state solution and the corresponding reference output, respectively, if they are defined for all $k \in \mathbb{Z}$, with $\| x^* \|_\infty < \infty$ and $\| y^* \|_{\infty} < \infty$.
\end{definition}

\begin{definition}
\label{defn:uoc}
System~\eqref{eq:DT-sys} is uniformly output convergent (UOC) if, for any input $u \in l^\infty_{n_u}$,
\begin{itemize}
\item[(i)] There exists a unique reference state solution $x^*$ with its corresponding reference output $y^*$.
\item[(ii)] There exists $\beta \in \mathcal{KL}$ independent of $u$ such that, for any $k, k_0 \in \mathbb{Z}$ with $k \geq k_0$ and $x(k_0) \in \mathbb{R}^{n_x}$,
\end{itemize}
\begin{equation}
\label{eq:defn_uoc}
\| y^*(k) - y(k) \| \leq \beta(\| x^*(k_0) - x(k_0) \|, k - k_0).
\end{equation}
System~\eqref{eq:DT-sys} with $y(k) = x(k)$ is uniformly convergent (UC) if, for any $u \in l^\infty_{n_u}$, 
\begin{itemize}
\item[(i)] There exists a unique reference state solution $x^*$.
\item[(ii)] There exists $\beta \in \mathcal{KL}$ independent of $u$ such that, for any $k, k_0 \in \mathbb{Z}$ with $k \geq k_0$ and $x(k_0) \in \mathbb{R}^{n_x}$,
\end{itemize}
\begin{equation}
\label{eq:defn_uc}
\| x^*(k) - x(k) \| \leq \beta(\| x^*(k_0) - x(k_0) \|, k - k_0).
\end{equation}
\end{definition}

\begin{remark} 
In this note, all gain functions (e.g. $\beta$ in \eqref{eq:defn_uoc} and \eqref{eq:defn_uc}) are independent of the input $u \in l_{n_u}^{\infty}$. Further, as in \cite{tran2018convergence}, `uniform' means that for each $x(k_0)$, the bound $\beta(\|x(k_0)\|, k-k_0)$ in \eqref{eq:defn_uoc} and \eqref{eq:defn_uc} depends on $k-k_0$ but not $k_0$.
\end{remark}

The UIOC property further extends the UOC property, and ensures that the perturbation in the reference outputs is asymptotically bounded by a nonlinear gain of the input perturbation.
The following definition of UIOC is a discrete-time analogue of the UISC property defined in \cite[Def.~3]{pavlov2005convergent} adapted to systems with output of the form \eqref{eq:DT-sys}.

\begin{definition}
\label{defn:uioc}
System \eqref{eq:DT-sys} is uniformly input-to-output convergent (UIOC) if it is UOC and for any $u, \overline{u} \in l^\infty_{n_u}$, with the reference state solution $x^*$ and its reference output $y^*$ associated to $u$, and any solution $\overline{x}(k)$ with any initial condition $\overline{x}(k_0)$ and the corresponding output $\overline{y}(k)$ associated to $\overline{u}$, there exists $\beta \in \mathcal{KL}, \gamma \in \mathcal{K}$ such that, for all $k_0, k \in \mathbb{Z}$ with $k \geq k_0$,
\begin{equation}
\label{eq:defn-uioc}
\begin{split}
\| y^*(k) - \overline{y}(k) \| \leq \max \{ \beta(\|x^*(k_0) - \overline{x}(k_0) \|, k - k_0), & \\
\gamma \left(\| (u - \overline{u})_{[k_0, k]} \| \right) \}. & 
\end{split}
\end{equation}
System~\eqref{eq:DT-sys} with $y(k) = x(k)$ is uniformly input-to-state convergent (UISC) if it is UC and
\begin{equation} \label{eq:defn-uisc}
\begin{split}
\| x^*(k) - \overline{x}(k) \| \leq \max \{ \beta(\|x^*(k_0) - \overline{x}(k_0) \|, k - k_0), & \\
 \gamma \left(\| (u - \overline{u})_{[k_0, k-1]} \| \right) \}.&
\end{split}
\end{equation}
\end{definition}
We conclude this section by summarizing the aforementioned stability concepts and their acronyms in Table~\ref{table1:acronym}.
\begin{table}[!ht]
\centering
\caption{Summary of stability concepts and their acronyms.}
\label{table1:acronym}
\begin{tabular}{ c | c  | c } 
Stability concept & Acronym & Definition \\
\hline\hline
Uniformly output convergent      & UOC  & Definition~\ref{defn:uoc} \\
\hline 
Uniformly convergent             & UC   & Definition~\ref{defn:uoc}  \\
\hline
Uniformly input-to-output convergent & UIOC & Definition~\ref{defn:uioc} \\
\hline
Uniformly input-to-state convergent  & UISC & Definition~\ref{defn:uioc} \\
\end{tabular}
\end{table}

\section{A UIOC small-gain theorem}
\label{sec:sg}
In this section, we present our main UIOC small-gain theorem (Theorem~\ref{theorem:uioc}). We first set some preliminaries.

For $k \in \mathbb{Z}$, consider the interconnected system (see Fig.~\ref{fig1}),
\begin{equation}
\label{eq:cl-sys}
\begin{split}
& \begin{cases}
x_1(k+1) = f_1(x_1(k), v_1(k), u_1(k)) \\
\hspace{1.8em} y_1(k) = h_1(x_1(k), v_1(k), u_1(k)), \\
\end{cases} \\
&\begin{cases}
x_2(k+1) = f_2(x_2(k), v_2(k), u_2(k)) \\
\hspace{1.8em} y_2(k) = h_2(x_2(k), v_2(k), u_2(k)), \\
\end{cases} \\
& \qquad v_1(k) = y_2(k), \ v_2(k) = y_1(k).
\end{split}
\end{equation}

\begin{figure}[!ht]
\centering
\includegraphics[scale=0.9, trim={0em, 0em, 0em, 0em}, clip]{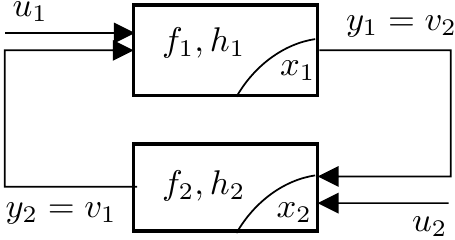}
\caption{Schematic of the interconnected system~\eqref{eq:cl-sys}.}
\label{fig1}
\end{figure}

For subsystems $j=1,2$, $x_j(k) \in \mathbb{R}^{n_{x_j}}$ are states, $y_j(k) \in \mathbb{R}^{n_{y_j}}$ are outputs and $u_j(k) \in \mathbb{R}^{n_{u_j}}$ are inputs with $u_j \in l^{\infty}_{n_{u_j}}$. Throughout, we assume that the interconnected system~\eqref{eq:cl-sys} is well-posed. That is, for any $k, k_0 \in \mathbb{Z}$ with $k \geq k_0$ and any initial conditions $x_j(k_0)$, there exists a unique solution $y(k) \coloneqq (y_1(k), y_2(k)) \in \mathbb{R}^{n_{y_1} + n_{y_2}}$ solving the algebraic equations $ y_1(k)  = h_1(x_1(k), y_2(k), u_1(k))$ and $y_2(k)  = h_2(x_2(k), y_1(k), u_2(k))$. 

For a well-posed system~\eqref{eq:cl-sys}, let $x(k) \coloneqq (x_1(k), x_2(k)) \in \mathbb{R}^{n_{x_1} + n_{x_2}}$ be the closed-loop solution to input $u(k) \coloneqq (u_1(k),$  $ u_2(k)) \in \mathbb{R}^{n_{u_1} + n_{u_2}}$, starting at $x(k_0) = (x_1(k_0), x_2(k_0))$. Note that the closed-loop system is causal by definition. 

We also assume that the subsystems in \eqref{eq:cl-sys} are UOC. Each UOC subsystem induces an I/O map $\mathcal{F}_j: l^{\infty}_{n_{v_j}} \times l^{\infty}_{n_{u_j}} \rightarrow l^{\infty}_{n_{y_j}}$ defined by $\mathcal{F}_j(v_j, u_j) = y^*_j$, where $y^*_j$ is the reference output. By construction $\mathcal{F}_j$ is causal, meaning that for any $\tau \in \mathbb{Z}$ and any $v_j \in l^{\infty}_{n_{v_j}}, u_j \in l^{\infty}_{n_{u_j}}$,
\begin{equation} \label{eq:causality}
\Pi_{\tau} \circ \mathcal{F}_j \circ \Pi_{\tau}(v_j, u_j) = \Pi_{\tau} \circ \mathcal{F}_j(v_j, u_j),
\end{equation}
where $\Pi_{\tau}(v_j, u_j) = (v_j(k), u_j(k))$ for $k \leq \tau$ and zero otherwise. We say that system~\eqref{eq:cl-sys} induces a well-posed closed-loop I/O map if the algebraic equations $y_1 = \mathcal{F}_2(y_2, u_2)$ and $y_2 = \mathcal{F}_1(y_1, u_1)$ have a unique bounded solution $y^*_{cl} = (y^*_{1, cl}, y^*_{2, cl}) \in l^\infty_{n_{y_1} + n_{y_2}}$, and the closed-loop I/O map $\mathcal{F}(u) =  y^*_{cl}$ is causal \cite{shamma1993fading}, where $u=(u_1, u_2) \in l^{\infty}_{n_{u_1} + n_{u_2}}$. We emphasize the difference between a well-pose  system \eqref{eq:cl-sys} and a well-posed closed-loop I/O map induced by \eqref{eq:cl-sys}.

We now state our main UIOC small-gain theorem.

\begin{theorem}
\label{theorem:uioc}
Consider a well-posed system~\eqref{eq:cl-sys} with UOC subsystems $j=1,2$. For any inputs $u_j \in l^\infty_{n_{u_j}}, v_j \in l^\infty_{n_{v_j}}$, let $x_j^*$ and $y^*_j$ be the corresponding reference state solutions and outputs. For any other inputs $\overline{u}_j, \overline{v}_j$ with $\overline{u}_j \in l^{\infty}_{n_{u_j}}$, let $\overline{x}_j$ and $\overline{y}_j$ be any corresponding solutions and outputs with initial conditions $\overline{x}_j(k_0)$. Suppose that there exists $\beta_j \in \mathcal{KL}$ and $\gamma^y_j, \gamma^u_j, \sigma_j, \sigma^u_j, \sigma^y_j \in \mathcal{K}$ (independent of $u_j, v_j$) such that, for all $k_0 , k \in \mathbb{Z}$, $k \geq k_0$ and any $\overline{x}_j(k_0) \in \mathbb{R}^{n_{x_j}}$,
\begin{equation}
\label{eq:theorem-uioc-y}
\begin{split}
&\| y^*_j(k) -  \overline{y}_j(k) \|\\
& \leq \max \{ \beta_j(\| x^*_j(k_0) - \overline{x}_j(k_0)\|, k - k_0),\\
& \hspace*{3.5em} \gamma_j^y( \|(v_j - \overline{v}_j)_{[k_0, k]}\|), \gamma^u_j(\| (u_{j} - \overline{u}_j)_{[k_0, k]}\|)\}, 
\end{split}
\end{equation}
\begin{equation}
\label{eq:theorem-uioc-x}
\begin{split}
& \| x^*_j(k) -  \overline{x}_j(k) \| \leq \max \{\sigma_j(\| x^*_j(k_0) - \overline{x}_j(k_0)\|), \\
& \hspace*{1em} \sigma^y_j(\| (v_j -  \overline{v}_{j})_{[k_0, k-1]}\|), \sigma^u_j(\| (u_{j} - \overline{u}_{j})_{[k_0, k-1]} \|) \}.
\end{split}
\end{equation}
If $\gamma_1^y \circ \gamma_2^y (s) < s$ (or equivalently $\gamma_2^y \circ \gamma_1^y (s) < s$ \cite[Chapter~8.1]{isidori2017lectures}) for all $s > 0$, for any $k_0 \in \mathbb{Z}$ and $x(k_0) \in \mathbb{R}^{n_{x_1} + n_{x_2}}$, the closed-loop solution and its output are bounded, i.e., $\sup_{k \geq k_0} \| x(k)\| < \infty$ and $\sup_{k \geq k_0} \| y(k) \| < \infty$. Further, the closed-loop system~\eqref{eq:cl-sys} induces a well-posed closed-loop I/O map and system~\eqref{eq:cl-sys} is UIOC.
\end{theorem}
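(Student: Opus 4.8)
The plan is to reduce the theorem to the time-varying discrete-time UIOS small-gain result established in the paper, applied in error coordinates centred at a closed-loop reference solution whose existence I first establish via a contraction argument on the subsystems' I/O maps.

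First I would derive an incremental $l^\infty$-gain bound for each $\mathcal F_j$. Applying \eqref{eq:theorem-uioc-y} with both trajectories taken to be reference solutions of subsystem $j$ --- hence defined and bounded on all of $\mathbb Z$ --- and letting $k_0\to-\infty$, the $\beta_j$-term tends to $0$ (its first argument stays bounded by $\|x^*_j\|_\infty+\|\bar x^*_j\|_\infty$ while $k-k_0\to\infty$), so $\|\mathcal F_j(v_j,u_j)-\mathcal F_j(\bar v_j,\bar u_j)\|_\infty\le\max\{\gamma^y_j(\|v_j-\bar v_j\|_\infty),\,\gamma^u_j(\|u_j-\bar u_j\|_\infty)\}$. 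Fixing $u=(u_1,u_2)$, substituting one closed-loop I/O equation into the other yields a fixed-point equation $y_1=\Phi(y_1)$ on the Banach space $l^\infty_{n_{y_1}}$, with $\Phi$ the appropriate composition of $\mathcal F_1$ and $\mathcal F_2$ at the fixed $u$, satisfying $\|\Phi(a)-\Phi(b)\|_\infty\le\gamma^y_1\circ\gamma^y_2(\|a-b\|_\infty)$; since $\gamma^y_1\circ\gamma^y_2(s)<s$ for $s>0$ and $\mathcal K$ functions are continuous, a contraction-mapping argument (of Boyd--Wong type, exploiting $\gamma^y_1\circ\gamma^y_2<\mathrm{id}$) gives a unique bounded fixed point and hence a unique bounded solution $y^*_{cl}=(y^*_1,y^*_2)$ of the closed-loop I/O equations, with $\mathcal F(u):=y^*_{cl}$ causal by \eqref{eq:causality} together with uniqueness of the fixed point on truncated inputs --- so \eqref{eq:cl-sys} induces a well-posed closed-loop I/O map. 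Taking $x^*_j$ to be the reference state of subsystem $j$ for the bounded input $(y^*_{3-j},u_j)$, the pair $x^*_{cl}=(x^*_1,x^*_2)$ is a bounded closed-loop solution on $\mathbb Z$, and it is the unique one because any bounded-on-$\mathbb Z$ closed-loop solution has bounded output solving the I/O fixed-point equations (hence equal to $y^*_{cl}$) and therefore equal state by uniqueness of subsystem reference states.

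Next, for boundedness of an arbitrary forward solution, fix $k_0$ and input $u$, let $x(k)$, $k\ge k_0$, be any closed-loop solution with output $y(k)$, and compare it with $x^*_{cl}$ through \eqref{eq:theorem-uioc-y}: the $\gamma^u_j$-terms vanish since the input is the same, and taking suprema over $[k_0,k]$ gives, for $D_j(k):=\|(y^*_j-y_j)_{[k_0,k]}\|$ (finite by the non-singularity assumptions), the coupled inequalities $D_j(k)\le\max\{\beta_j(\|x^*_j(k_0)-x_j(k_0)\|,0),\,\gamma^y_j(D_{3-j}(k))\}$. The scalar small-gain lemma, valid since $\gamma^y_1\circ\gamma^y_2<\mathrm{id}$, bounds $D_1(k)$ and $D_2(k)$ by constants independent of $k$; hence $\sup_{k\ge k_0}\|y(k)\|<\infty$, and then $\sup_{k\ge k_0}\|x(k)\|<\infty$ follows from \eqref{eq:theorem-uioc-x}. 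For the UIOC estimate I would pass to error coordinates: with $u$, $x^*_{cl}$, $y^*_{cl}$ fixed, set $\tilde x_j=\bar x_j-x^*_j$, $\tilde v_j=\bar v_j-y^*_{3-j}$, $\tilde u_j=\bar u_j-u_j$, $\tilde y_j=\bar y_j-y^*_j$; the error dynamics form a time-varying interconnected system of the form \eqref{eq:cl-sys} (the time dependence entering through $x^*_j(k),y^*_j(k),u_j(k)$) with the same feedback wiring $\tilde v_1=\tilde y_2$, $\tilde v_2=\tilde y_1$, and \eqref{eq:theorem-uioc-y} and \eqref{eq:theorem-uioc-x} become UIOS and state-bound estimates for the error subsystems with the same gains $\beta_j,\gamma^y_j,\gamma^u_j,\sigma_j,\sigma^y_j,\sigma^u_j$. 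As $\gamma^y_1\circ\gamma^y_2(s)<s$, the time-varying discrete-time UIOS small-gain theorem applies and yields $\beta\in\mathcal{KL}$, $\gamma\in\mathcal K$, independent of $u$, with $\|\tilde y(k)\|\le\max\{\beta(\|\tilde x(k_0)\|,k-k_0),\,\gamma(\|\tilde u_{[k_0,k]}\|)\}$, which in the original variables is exactly \eqref{eq:defn-uioc}; setting $\bar u=u$ recovers \eqref{eq:defn_uoc}, so with the existence and uniqueness of $x^*_{cl}$ the closed-loop system is UOC and therefore UIOC.

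The step I expect to be the main obstacle is the passage to error coordinates: one must verify that the resulting time-varying subsystems fall precisely under the hypotheses of the UIOS small-gain theorem while retaining the $\mathcal{KL}$ decay, since the coarse supremum estimate used for boundedness necessarily discards that decay --- this is where the refined time-varying UIOS machinery is genuinely required. A secondary technical point is making the fixed-point and causality argument for $\mathcal F$ rigorous, and in particular checking that every bounded-on-$\mathbb Z$ closed-loop solution has bounded output, since well-posedness of \eqref{eq:cl-sys} alone only guarantees pointwise-in-time solvability of the output algebraic loop.
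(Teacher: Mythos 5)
Your proposal follows essentially the same route as the paper: establish the closed-loop I/O map via a fixed-point argument on the incremental $l^\infty$-gain bounds of the subsystem I/O maps obtained by sending $k_0\to-\infty$ in \eqref{eq:theorem-uioc-y}, pass to error coordinates centred at the resulting closed-loop reference solution, and invoke the time-varying UIOS small-gain theorem (Theorem~\ref{theorem:uios}) to obtain the UIOC estimate. Your Boyd--Wong remark is in fact a small technical sharpening of the paper's argument, since the condition $\gamma_1^y\circ\gamma_2^y(s)<s$ for $s>0$ only yields a contractive map and not a uniform contraction, so an appeal to a Boyd--Wong-type fixed-point theorem (rather than the Banach fixed-point theorem as cited) is what is genuinely needed there.
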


\begin{remark}
Although Theorem~\ref{theorem:uioc} ensures the closed-loop solution $x(k)$ is bounded, in general, it does not guarantee the closed-loop system~\eqref{eq:cl-sys} is UISC, which would require additional conditions; see Corollary~\ref{corollary:uisc}.
\end{remark}

The main idea in the proof of Theorem~\ref{theorem:uioc} is to use the Banach fixed point theorem to show system~\eqref{eq:cl-sys} induces a well-posed closed-loop I/O map. To show that system~\eqref{eq:cl-sys} is UIOC, we apply a change-of-coordinate argument in which the system under consideration becomes time-varying of the form \eqref{eq:uios-cl-sys}. We then apply a uniform input-to-output stability (UIOS) small-gain theorem (Theorem~\ref{theorem:uios}) to system~\eqref{eq:uios-cl-sys} to establish the UIOC property of \eqref{eq:cl-sys}. We first define UIOS and state Theorem~\ref{theorem:uios} whose full proof is given in Appendix~\ref{app:theorem_uios}.

\begin{definition}
\label{defn:UIOS}
System~\eqref{eq:DT-sys} is uniformly input-to-output stable (UIOS) if there exists $\beta \in \mathcal{KL}$ and $\gamma \in \mathcal{K}$ such that, for any $u \in l^\infty_{n_u}$, $k, k_0 \in \mathbb{Z}$ with $k \geq k_0$ and $x(k_0) \in \mathbb{R}^{n_x}$,
\begin{equation} \label{eq:defn_UIOS} 
\|y(k) \| \leq \max \left\{ \beta \left(\|x(k_0)\|, k-k_0 \right) , \gamma\left( \left\| u_{[k_0, k]} \right\| \right)  \right\}.
\end{equation}
\end{definition}

\begin{theorem}
\label{theorem:uios}
Consider a well-posed time-varying system
\begin{equation}
\label{eq:uios-cl-sys}
\begin{split}
& \begin{cases}
\Delta x_1(k+1) = \tilde{f}_1(k, \Delta x_1(k), \Delta v_1(k), \Delta u_1(k)) \\
\hspace{1.8em} \Delta y_1(k) = \tilde{h}_1(k, \Delta x_1(k), \Delta v_1(k), \Delta u_1(k)), \\
\end{cases} \\
&\begin{cases}
\Delta x_2(k+1) = \tilde{f}_2(k, \Delta x_2(k), \Delta v_2(k), \Delta u_2(k)) \\
\hspace{1.8em} \Delta y_2(k) = \tilde{h}_2(k, \Delta x_2(k), \Delta v_2(k),  \Delta u_2(k)), \\
\end{cases} \\
& \qquad \Delta v_1(k) = \Delta y_2(k), \ \Delta v_2(k) = \Delta y_1(k).
\end{split}
\end{equation}
For $j=1,2$, suppose that there exists $\beta_j \in \mathcal{KL}$ and $\gamma^y_j, \gamma^u_j, \sigma_j, \sigma^u_j, \sigma^y_j \in \mathcal{K}$ such that, for any $\Delta v_j, \Delta u_j$ with $\Delta u_j \in l^{\infty}_{ n_{\Delta u_j}}$, $k_0, k \in \mathbb{Z}$ with $k \geq k_0$ and $\Delta  x_j(k_0) \in \mathbb{R}^{n_{x_j}}$,
\begin{equation}
\label{eq:theorem-uios-y}
\begin{split}
\| \Delta y_j(k) \| \leq \max \{ & \beta_j(\|\Delta x_j(k_0) \|, k-k_0), \\
& \gamma^y_j(\|\Delta v_{j_{[k_0, k]}} \|), \gamma^u_j(\| \Delta u_{j_{[k_0, k]}} \|) \} ,
\end{split}
\end{equation}
\begin{equation}
\label{eq:theorem-uios-x}
\begin{split}
\| \Delta x_j (k) \| \leq & \max \{  \sigma_j (\| \Delta x_j (k_0) \|), \\
& \sigma^y_j(\| \Delta v_{j_{[k_0, k-1]}} \|), \sigma^u_j(\|\Delta u_{j_{[k_0, k-1]}} \|) \}. 
\end{split}
\end{equation}
If $\gamma^y_1 \circ \gamma^y_2 (s) < s$ (or equivalently $\gamma^y_2 \circ \gamma^y_1 (s) < s$) for all $s>0$, then for any $k_0 \in \mathbb{Z}$ and $\Delta x(k_0) \in \mathbb{R}^{n_{x_1} + n_{x_2}}$, the closed-loop solution and its output are bounded, i.e., $\sup_{k \geq k_0} \| \Delta x(k) \| < \infty$ and $\sup_{k \geq k_0} \| \Delta y(k) \| < \infty$. Furthermore, the closed-loop system~\eqref{eq:uios-cl-sys} is UIOS.
\end{theorem}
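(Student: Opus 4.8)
The plan is to derive the closed-loop UIOS estimate in three stages. First, extract uniform boundedness of the closed-loop state and output from the max-type subsystem bounds \eqref{eq:theorem-uios-y}--\eqref{eq:theorem-uios-x}; second, prove a \emph{uniform attraction} property: after a time $T(\varepsilon,R)$ depending only on $\varepsilon$ and on a bound $R$ for $\|\Delta x(k_0)\|$ (in particular, not on $k_0$), the closed-loop output is confined to a ball of radius $\max\{\varepsilon,\kappa(\|\Delta u_{[k_0,k]}\|)\}$ for some $\kappa\in\mathcal{K}$; third, assemble a $\mathcal{KL}$ estimate from these two facts. Only three elementary consequences of the small-gain hypothesis are needed: (a) a nonnegative $s$ with $s\le\max\{a,b(s)\}$, $b\in\mathcal{K}$, $b(s)<s$ for $s>0$, satisfies $s\le a$; (b) $(\gamma_1^y\circ\gamma_2^y)^{\circ p}(r)\downarrow 0$ as $p\to\infty$ for every $r\ge 0$; (c) every composition of $\gamma_1^y,\gamma_2^y$ of length $\ge 1$ is majorized by the single $\mathcal{K}_\infty$ function $\Psi:=\max\{\mathrm{id},\gamma_1^y,\gamma_2^y\}$, and of length $\ge 2$ by $\mathrm{id}$.

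\textbf{Boundedness.} Fix $k_0$, $\Delta x(k_0)$ and a closed-loop trajectory. Putting $\Delta v_1=\Delta y_2$ into \eqref{eq:theorem-uios-y} for $j=1$ and taking the supremum over $[k_0,k]$ gives $\|\Delta y_{1_{[k_0,k]}}\|\le\max\{\beta_1(\|\Delta x(k_0)\|,0),\gamma_1^y(\|\Delta y_{2_{[k_0,k]}}\|),\gamma_1^u(\|\Delta u_{1_{[k_0,k]}}\|)\}$, and symmetrically for $\|\Delta y_{2_{[k_0,k]}}\|$. Substituting one into the other and invoking (a) with $s=\|\Delta y_{1_{[k_0,k]}}\|$ and $b=\gamma_1^y\circ\gamma_2^y$ eliminates the self-reference, leaving $\|\Delta y_{j_{[k_0,k]}}\|\le\max\{\varsigma(\|\Delta x(k_0)\|),\varrho(\|\Delta u_{[k_0,k]}\|)\}$ for suitable $\varsigma,\varrho\in\mathcal{K}$. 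Feeding this into \eqref{eq:theorem-uios-x} bounds $\|\Delta x_j(k)\|$ for all $k\ge k_0$ by $\max\{a_j(\|\Delta x(k_0)\|),b_j(\|\Delta u_{[k_0,k]}\|)\}$ with $a_j,b_j\in\mathcal{K}$. In particular $\sup_{k\ge k_0}\|\Delta x(k)\|<\infty$ and $\sup_{k\ge k_0}\|\Delta y(k)\|<\infty$, and the output estimate is the uniform boundedness bound used below.

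\textbf{Uniform attraction and $\mathcal{KL}$ assembly.} Fix $\varepsilon,R>0$ and restrict to $\|\Delta x(k_0)\|\le R$. Shifting the initial instant in \eqref{eq:theorem-uios-y} to $k_1:=k_0+N$ and using the uniform bounds above, for $k\ge k_1$ one has $\|\Delta y_1(k)\|\le\max\{\beta_1(a_1(R),k-k_1),\ \gamma_1^y(\sup_{k_1\le k'\le k}\|\Delta y_2(k')\|),\ \Gamma_1(\|\Delta u\|_\infty)\}$, where the memory term now carries $a_1(R)$ alone --- the part $\beta_1(b_1(\|\Delta u\|_\infty),k-k_1)\le\beta_1(b_1(\|\Delta u\|_\infty),0)$ has been absorbed, together with $\gamma_1^u$, into an input gain $\Gamma_1\in\mathcal{K}$. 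Writing $m_j(N):=\sup_{k'\ge k_0+N}\|\Delta y_j(k')\|$ (finite and bounded by step one), this yields, with $k_0$-independent comparison functions, $m_1(N+T)\le\max\{\beta_1(a_1(R),T),\gamma_1^y(m_2(N)),\Gamma_1(\|\Delta u\|_\infty)\}$ and symmetrically for $m_2$. Iterating the pair $2p$ times and using (c): the initial-condition contribution becomes $(\gamma_1^y\circ\gamma_2^y)^{\circ p}$ of the step-one bound; each $\beta_j(a_j(R),T)$-residue threaded through output gains stays $\le\Psi(\max_j\beta_j(a_j(R),T))=:\delta(R,T)$, decreasing to $0$ in $T$; and the input terms stay $\le\Psi(\max_j\Gamma_j(\|\Delta u\|_\infty))+\varrho(\|\Delta u\|_\infty)=:\tilde\Gamma(\|\Delta u\|_\infty)$ --- all three bounds independent of $p$. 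Picking $T=T(\varepsilon,R)$ with $\delta(R,T)\le\varepsilon/2$ and then $p=p(\varepsilon,R)$ with $(\gamma_1^y\circ\gamma_2^y)^{\circ p}(\varsigma(R))\le\varepsilon/2$ (fact (b)) gives $\|\Delta y_j(k)\|\le\max\{\varepsilon,\tilde\Gamma(\|\Delta u\|_\infty)\}$ for all $k\ge k_0+2pT$, with $2pT$ depending on $\varepsilon,R$ only; a routine causality refinement, tracking finite windows $[k_0,k]$ in place of $[k_0,\infty)$, replaces $\|\Delta u\|_\infty$ by $\|\Delta u_{[k_0,k]}\|$. Combining this uniform attraction with the uniform boundedness bound and invoking the standard comparison lemma that turns "uniform boundedness $+$ uniform attraction to the gain ball" into a $\mathcal{KL}$ estimate (the discrete-time, causal analogue of Sontag's $\mathcal{KL}$-lemma) produces $\beta\in\mathcal{KL}$, $\gamma\in\mathcal{K}$ with $\|\Delta y(k)\|\le\max\{\beta(\|\Delta x(k_0)\|,k-k_0),\gamma(\|\Delta u_{[k_0,k]}\|)\}$, i.e., \eqref{eq:uios-cl-sys} is UIOS.

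\textbf{Main obstacle.} The crux is the bookkeeping in the attraction step: one must verify that along the iteration (1) every input-generated term is dominated by a single $\mathcal{K}$-gain and never contaminates the memory term; (2) the memory term genuinely contracts --- here (c) keeps a $\beta$-residue threaded through an arbitrarily long string of output gains uniformly $O(\beta_j(a_j(R),T))$, so the residues do not accumulate as $p\to\infty$; and (3) the entry time $T(\varepsilon,R)$ is independent of $k_0$, which is precisely where one uses that \eqref{eq:theorem-uios-y}--\eqref{eq:theorem-uios-x} hold with $k_0$-independent comparison functions. An alternative, flagged by the surrounding discussion, is to interpolate the discrete-time closed loop into an auxiliary continuous-time system and apply the generalized small-gain theorem of \cite{sontag2002small}; the obstacle then shifts to verifying that the interpolated, time-varying system meets that theorem's hypotheses.
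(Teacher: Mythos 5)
Your proposal follows essentially the paper's route: boundedness from the small-gain condition, then uniform attraction via iterating the contraction together with time-shifted reapplications of the subsystem estimates, then assembly of the $\mathcal{KL}$ function from ``uniform boundedness $+$ uniform attraction.'' The only organizational difference is that you iterate the pair $(m_1,m_2)$ directly, whereas the paper first shifts the initial times once (by $\lceil k_1/2\rceil$ and $\lfloor k_1/4\rfloor$) to produce a single self-referential inequality \eqref{eq:uios-proof4}, and then isolates the contraction-iteration and $\mathcal{KL}$-assembly into a reusable Lemma~\ref{lemma:ex_beta}. One caution: the ``standard comparison lemma'' you invoke for the final $\mathcal{KL}$ assembly is precisely where the paper does the real work --- there is no ready-made discrete-time causal version, so the paper extends the discrete output to a piecewise-constant function of continuous time and applies the continuous-time Sontag Lemma~A.2 (Lemma~\ref{lemma:sontag}); you correctly flag this as the crux, and the interpolation happens at the level of the comparison inequality rather than the dynamics.
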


We now detail the proof for Theorem~\ref{theorem:uioc}.

\begin{proof}[Proof of Theorem~\ref{theorem:uioc}]
For any fixed inputs $u_j \in l^\infty_{n_{u_j}}$, the I/O map induced by each subsystem is given by $\mathcal{F}^{u_j}_j: l_{n_{v_j}}^\infty \rightarrow l^\infty_{n_{y_j}}, \mathcal{F}^{u_j}_j(v_j) = y^*_j$. For any $v_j, \overline{v}_j \in l^\infty_{n_{v_j}}$, let $k_0 \rightarrow -\infty$ and take the supremum over $k \in \mathbb{Z}$ in \eqref{eq:theorem-uioc-y},
\begin{equation} \label{eq:uioc-proof1}
\| \mathcal{F}^{u_j}_j(v_j) - \mathcal{F}^{u_j}_j(\overline{v}_j) \|_{\infty} \leq \gamma^y_j(\|v_j -\overline{v}_j\|_{\infty}).
\end{equation}
Consider the composition $\mathcal{F}^{u_1}_1 \circ \mathcal{F}^{u_2}_2 : l^{\infty}_{n_{v_2}} \rightarrow l^{\infty}_{n_{v_2}}$. Applying inequality \eqref{eq:uioc-proof1} twice, we have
\begin{equation*}
\begin{split}
& \| \mathcal{F}^{u_1}_1 \circ \mathcal{F}^{u_2}_2 (v_2) - \mathcal{F}^{u_1}_1 \circ \mathcal{F}^{u_2}_2(\overline{v}_2) \|_{\infty} \\
& \quad \leq \gamma^y_1 \circ \gamma^y_2 (\| v_2 - \overline{v}_2\|_{\infty}) < \| v_2 - \overline{v}_2 \|_{\infty}.
\end{split}
\end{equation*}
Therefore, $\mathcal{F}^{u_1}_1 \circ \mathcal{F}^{u_2}_2$ is a strict contraction on  $(l^\infty_{n_{v_2}}, \| \cdot \|_\infty)$. Its unique fixed point $y^*_{1, cl} \in l^{\infty}_{n_{v_2}}$ given by the Banach fixed-point theorem \cite{kreyszig1978introductory} is the reference output of subsystem $j=1$. The corresponding reference output of subsystem $j=2$ is $y^*_{2, cl} = \mathcal{F}^{u_2}_2(y^*_{1, cl})$. A symmetric argument shows that $\mathcal{F}^{u_2}_2 \circ \mathcal{F}^{u_1}_1$ is a strict contraction defined on $(l^{\infty}_{n_{v_1}}, \| \cdot \|_{\infty})$.

To show that the closed-loop I/O map is causal, for any $\tau \in \mathbb{Z}$, consider $y^*_{1, \Pi_{\tau}} \in l^{\infty}_{n_{v_2}}$ the unique fixed point of $\mathcal{F}^{\Pi_{\tau}(u_1)}_1 \circ \mathcal{F}^{\Pi_{\tau}(u_2)}_2$. Causality follows if $\Pi_{\tau}(y^*_{1, \Pi_\tau}) = \Pi_{\tau}(y^*_{1, cl})$. Re-express \eqref{eq:causality} as $\Pi_{\tau} \circ \mathcal{F}^{\Pi_{\tau}(u_j)}_j \circ \Pi_{\tau} (v_j) = \Pi_{\tau} \circ \mathcal{F}^{u_j}_j(v_j)$,
\begin{equation*}
\begin{split}
\Pi_\tau (y^*_{1, cl})& =  \Pi_\tau \circ \mathcal{F}^{u_1}_1 \circ \mathcal{F}^{u_2}_2(y^*_{1, cl}) \\
& = \Pi_\tau \circ \mathcal{F}^{\Pi_\tau(u_1)}_1 \circ \Pi_\tau (\mathcal{F}^{u_2}_2(y^*_{1, cl}))  \\
& = \Pi_\tau \circ \mathcal{F}^{\Pi_\tau(u_1)}_1 \circ \Pi_\tau \circ \mathcal{F}^{\Pi_\tau(u_2)}_2 \circ \Pi_\tau (y^*_{1, cl}) \\
& = \Pi_\tau \circ \mathcal{F}^{\Pi_\tau(u_1)}_1 \circ \mathcal{F}^{\Pi_\tau(u_2)}_2 \circ \Pi_\tau (y^*_{1, cl}).
\end{split}
\end{equation*}
Similarly, $\Pi_\tau(y^*_{1, \Pi_\tau})  =  \Pi_{\tau} \circ \mathcal{F}^{\Pi_\tau(u_1)}_1 \circ  \mathcal{F}^{\Pi_\tau(u_2)}_2 \circ \Pi_\tau (y^*_{1, \Pi_\tau})$. Note that $\Pi_{\tau} \circ \mathcal{F}^{\Pi_\tau(u_1)}_1 \circ \mathcal{F}^{\Pi_\tau(u_2)}_2$ is a strict contraction, by the Banach fixed point theorem we have $\Pi_\tau(y^*_{1, cl}) = \Pi_\tau(y^*_{1, \Pi_\tau})$.

To establish closed-loop UIOC, let $x^*_{1, cl}$ be the reference state solution to subsystem $j=1$ with respect to input $(y^*_{2, cl}, u_1)$ and analogously for $x^*_{2,cl}$. Then $x^*_{cl} = (x^*_{1, cl}, x^*_{2, cl})$ is the closed-loop reference state solution.
Let $\overline{x}(k) = (\overline{x}_1(k), \overline{x}_2(k)), \overline{y}(k) = (\overline{y}_1(k), \overline{y}_2(k))$ be any other closed-loop solution and its corresponding output to another input $\overline{u} = (\overline{u}_1, \overline{u}_2)$, starting at $\overline{x}(k_0)$. From \eqref{eq:theorem-uioc-y} and \eqref{eq:theorem-uioc-x}, the same argument as in the proof of Theorem~\ref{theorem:uios} shows that $\sup_{k \geq k_0} \|x_{cl}^*(k) - \overline{x}(k)\| < \infty$, $\sup_{k \geq k_0}  \|y^*_{cl}(k) - \overline{y}(k) \| < \infty$.

For $j=1, 2$ and $k \geq k_0$, define $\Delta x_j(k) = \overline{x}_j(k) - x^*_{j, cl}(k)$, $\Delta y_j(k) = \overline{y}_j(k) - y^*_{j, cl}(k)$ and $\Delta u_j(k) = \overline{u}_j(k)- u_j(k)$. Let $v^*_{1, cl} = y^*_{2, cl}$ and $v^*_{2, cl} = y^*_{1, cl}$, we have the time-varying systems with states $\Delta x_j(k)$, outputs $\Delta y_j(k)$, inputs $\Delta u_j(k)$, and interconnections $\Delta v_1(k) = \Delta y_2(k)$ and $\Delta v_2(k) = \Delta y_1(k)$,
\begin{equation*}
\begin{split}
& \Delta x_j(k+1) \\
& = f_j(\Delta x_j(k) + x^*_{j, cl}(k), \Delta v_j(k) + v^*_{j, cl}(k), \Delta u_j(k) + u_j(k)) \\
& \qquad - f_j(x^*_{j, cl}(k), v^*_{j, cl}(k), u_j(k)) \\
& = \tilde{f}_j(k, \Delta x_j(k), \Delta v_j(k), \Delta u_j(k)), \\
\end{split}
\end{equation*}
\begin{equation*}
\begin{split}
& \Delta y_j(k) \\
& = h_j(\Delta x_j(k) + x^*_{j, cl}(k), \Delta v_j(k) + v^*_{j, cl}(k), \Delta u_j(k) + u_j(k)) \\
& \qquad - h_j(x^*_{j, cl}(k), v^*_{j, cl}(k), u_j(k)) \\
& = \tilde{h}_j(k, \Delta x_j(k), \Delta v_j(k), \Delta u_j(k)).
\end{split}
\end{equation*}
From \eqref{eq:theorem-uioc-y} and \eqref{eq:theorem-uioc-x}, $\Delta x_j(k)$ and $\Delta y_j(k)$ satisfy \eqref{eq:theorem-uios-y} and \eqref{eq:theorem-uios-x} in Theorem~\ref{theorem:uios}. Finally, closed-loop UIOC follows from applying Theorem~\ref{theorem:uios} to the above interconnected systems.
\end{proof}

Sometimes it is convenient to upper bound $\| y^*_j(k) - \overline{y}(k)\|$ in \eqref{eq:theorem-uioc-y} by a sum instead of $\max$ of nonlinear gains. That is, 
\begin{equation} \label{eq:alternative-uioc}
\begin{split}
\| y^*_j(k) - \overline{y}_j(k) \| \leq  \beta_j(\|x^*_j(k_0) - \overline{x}_j(k_0) \|, k-k_0) &\\
 + \gamma^y_j(\| (v_j - \overline{v}_j)_{[k_0, k]} \|) + \gamma^u_j(\| (u_j - \overline{u}_j)_{[k_0, k]} \|).&
\end{split}
\end{equation}
Theorem~\ref{theorem:uioc} can be applied to this scenario by re-writing \eqref{eq:alternative-uioc} in terms of $\max$ of nonlinear gains. If system~\eqref{eq:cl-sys} satisfies \eqref{eq:alternative-uioc} instead of \eqref{eq:theorem-uioc-y}, to ensure UIOC of \eqref{eq:cl-sys}, the condition $\gamma^y_1 \circ \gamma^y_2(s)<s$ for all $s > 0$ needs to be strengthened to \eqref{eq:coro-uios-plus} below. We first present a lemma that allows us to re-write \eqref{eq:alternative-uioc} in terms of $\max$.

\begin{lemma} \label{lemma:plus-extension}
Given any $\lambda \in \mathcal{K}_{\infty}$, for any $a, b \geq 0$, it holds that $a + b \leq \max \{ a + \lambda(a), b + \lambda^{-1}(b)\}$.
\end{lemma}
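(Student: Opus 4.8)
The plan is to prove Lemma~\ref{lemma:plus-extension} by a simple case analysis on the relative sizes of $a$ and $b$, exploiting the monotonicity of $\lambda \in \mathcal{K}_\infty$. First I would dispose of the trivial cases: if $a = 0$ then $a + b = b \le b + \lambda^{-1}(b)$, and symmetrically if $b = 0$ then $a + b = a \le a + \lambda(a)$, so henceforth assume $a, b > 0$. Since $\lambda \colon \mathbb{R}_{\ge 0} \to \mathbb{R}_{\ge 0}$ is a bijection (being $\mathcal{K}_\infty$), exactly one of the following holds: $b \le \lambda(a)$ or $b > \lambda(a)$; equivalently, by applying the strictly increasing map $\lambda^{-1}$, either $\lambda^{-1}(b) \le a$ or $\lambda^{-1}(b) > a$.

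In the first case, $b \le \lambda(a)$, I would simply add $a$ to both sides to get $a + b \le a + \lambda(a) \le \max\{a + \lambda(a),\, b + \lambda^{-1}(b)\}$, which is the desired bound. In the second case, $\lambda^{-1}(b) > a$, i.e.\ $a \le \lambda^{-1}(b)$ (strict inequality is fine, but $\le$ suffices), I would add $b$ to both sides to get $a + b \le b + \lambda^{-1}(b) \le \max\{a + \lambda(a),\, b + \lambda^{-1}(b)\}$. Since the two cases are exhaustive, the inequality $a + b \le \max\{a + \lambda(a),\, b + \lambda^{-1}(b)\}$ holds for all $a, b \ge 0$, completing the proof.

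There is essentially no obstacle here: the only point requiring a moment's care is to note that $\lambda \in \mathcal{K}_\infty$ guarantees $\lambda^{-1}$ exists, is itself in $\mathcal{K}_\infty$, and is strictly increasing, so that the equivalence $b \lessgtr \lambda(a) \iff \lambda^{-1}(b) \lessgtr a$ is legitimate. An alternative phrasing avoids even naming the cases: one always has $a + b \le \max\{a,\lambda^{-1}(b)\} + \max\{\lambda(a), b\}$, but this is slightly weaker than what is claimed, so the clean two-case argument above is preferable. I expect the author's proof to follow exactly this pattern, possibly collapsed into one or two sentences.
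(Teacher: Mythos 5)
Your proof is correct and mirrors the paper's argument exactly: the same two-case split on $b \le \lambda(a)$ versus $b > \lambda(a)$, using that $\lambda^{-1}$ exists and is increasing. The preliminary treatment of $a=0$ or $b=0$ is harmless but unnecessary, since the case analysis already covers those values.
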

\begin{proof}
Since $\lambda \in \mathcal{K}_{\infty}$, its inverse $\lambda^{-1}$ exists and is in $\mathcal{K}_{\infty}$. Consider two cases. Suppose $b \leq \lambda(a)$, then $a+b \leq a + \lambda(a) \leq \max \{ a + \lambda(a), b + \lambda^{-1}(b)\}$. Otherwise, $b > \lambda(a)$. Applying $\lambda^{-1}$ on both sides gives $\lambda^{-1}(b) > a$ and $a+b < \lambda^{-1}(b) + b \leq \max\{ a+\lambda(a), b + \lambda^{-1}(b) \}$.
\end{proof}

Theorem~\ref{theorem:uioc} and Lemma~\ref{lemma:plus-extension} leads to the following Corollary.
\begin{corollary}
\label{corollary:uioc-plus}
Consider a well-posed system~\eqref{eq:cl-sys} with UOC subsystems. For any inputs $u_j \in l^\infty_{n_{u_j}}, v_j \in l^\infty_{n_{v_j}}$, let $x_j^*$ and $y^*_j$ be the corresponding reference state solutions and outputs. For any other inputs $\overline{u}_j, \overline{v}_j$ with $\overline{u}_j \in l^{\infty}_{n_{u_j}}$, let $\overline{x}_j$ and $\overline{y}_j$ be any corresponding solutions and outputs with initial conditions $\overline{x}_j(k_0)$. Suppose that there exists $\beta_j \in \mathcal{KL}$ and $\gamma^y_j, \gamma^u_j, \sigma_j, \sigma^u_j, \sigma^y_j \in \mathcal{K}$ such that, for all $k_0 , k \in \mathbb{Z}$ with $k \geq k_0$ and any $\overline{x}_j(k_0) \in \mathbb{R}^{n_{x_j}}$, \eqref{eq:theorem-uioc-x} and \eqref{eq:alternative-uioc} hold. If there exists $\lambda_j \in \mathcal{K}_{\infty}$ such that for all $s >0$,
\begin{equation} \label{eq:coro-uios-plus}
\begin{split}
&(id + \lambda_1) \circ \gamma^y_1 \circ (id + \lambda_2) \circ \gamma^y_2 (s) < s, \\
\end{split}
\end{equation}
where $id$ is the identity map. Then for any $k_0 \in \mathbb{Z}$ and $x(k_0) \in \mathbb{R}^{n_{x_1} + n_{x_2}}$, the closed-loop solution and output are bounded, i.e., $\sup_{k \geq k_0}\|x(k)\| < \infty$ and $\sup_{k \geq k_0} \| y(k) \| < \infty$. Further, system~\eqref{eq:cl-sys} is UIOC and induces a well-posed closed-loop I/O map.
\end{corollary}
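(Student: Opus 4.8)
The plan is to deduce Corollary~\ref{corollary:uioc-plus} directly from Theorem~\ref{theorem:uioc}, by converting the additive output estimate \eqref{eq:alternative-uioc} into the $\max$-form estimate \eqref{eq:theorem-uioc-y} in a way that makes the new loop gain equal to $(id+\lambda_1)\circ\gamma^y_1\circ(id+\lambda_2)\circ\gamma^y_2$, so that condition \eqref{eq:coro-uios-plus} becomes precisely the small-gain hypothesis of Theorem~\ref{theorem:uioc}. Since \eqref{eq:theorem-uioc-x} is already assumed in $\max$-form, the state estimate and the gains $\sigma_j,\sigma^y_j,\sigma^u_j$ carry over unchanged, so no work is needed on the state side.

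First, fix $j\in\{1,2\}$ and, for given $k\ge k_0$, abbreviate the three terms on the right-hand side of \eqref{eq:alternative-uioc} as $p_j:=\beta_j(\|x^*_j(k_0)-\overline{x}_j(k_0)\|,k-k_0)$, $q^v_j:=\gamma^y_j(\|(v_j-\overline{v}_j)_{[k_0,k]}\|)$ and $q^u_j:=\gamma^u_j(\|(u_j-\overline{u}_j)_{[k_0,k]}\|)$, so that \eqref{eq:alternative-uioc} reads $\|y^*_j(k)-\overline{y}_j(k)\|\le q^v_j+(p_j+q^u_j)$. Applying Lemma~\ref{lemma:plus-extension} with the hypothesised $\lambda_j\in\mathcal{K}_\infty$, $a=q^v_j$ and $b=p_j+q^u_j$ yields $\|y^*_j(k)-\overline{y}_j(k)\|\le\max\{(id+\lambda_j)(q^v_j),\,(id+\lambda_j^{-1})(p_j+q^u_j)\}$. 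Then I would split the second term via $p_j+q^u_j\le 2\max\{p_j,q^u_j\}$ and monotonicity of $id+\lambda_j^{-1}\in\mathcal{K}_\infty$ to get $(id+\lambda_j^{-1})(p_j+q^u_j)\le\max\{(id+\lambda_j^{-1})(2p_j),\,(id+\lambda_j^{-1})(2q^u_j)\}$. Hence \eqref{eq:alternative-uioc} implies \eqref{eq:theorem-uioc-y} with
\begin{align*}
\tilde{\beta}_j(s,t) &:= 2\beta_j(s,t)+\lambda_j^{-1}(2\beta_j(s,t)),\\
\tilde{\gamma}^y_j &:= (id+\lambda_j)\circ\gamma^y_j,\\
\tilde{\gamma}^u_j &:= (id+\lambda_j^{-1})\circ(2\gamma^u_j),
\end{align*}
and a short check (using $\beta_j(0,t)=0$ and $\lambda_j^{-1}(0)=0$) confirms $\tilde{\beta}_j\in\mathcal{KL}$ and $\tilde{\gamma}^y_j,\tilde{\gamma}^u_j\in\mathcal{K}$.

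With these gains the small-gain hypothesis of Theorem~\ref{theorem:uioc} reads $\tilde{\gamma}^y_1\circ\tilde{\gamma}^y_2(s)=(id+\lambda_1)\circ\gamma^y_1\circ(id+\lambda_2)\circ\gamma^y_2(s)<s$ for all $s>0$, which is exactly \eqref{eq:coro-uios-plus}; in particular this already implies $\gamma^y_1\circ\gamma^y_2(s)<s$, so the Banach fixed-point / well-posedness part of the proof of Theorem~\ref{theorem:uioc} applies verbatim. Invoking Theorem~\ref{theorem:uioc} for system~\eqref{eq:cl-sys} with $(\tilde{\beta}_j,\tilde{\gamma}^y_j,\tilde{\gamma}^u_j,\sigma_j,\sigma^y_j,\sigma^u_j)$ in place of $(\beta_j,\gamma^y_j,\gamma^u_j,\sigma_j,\sigma^y_j,\sigma^u_j)$ then gives boundedness of $x(k)$ and $y(k)$, the well-posed closed-loop I/O map, and the UIOC property, which is the claim.

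The only genuine obstacle is the bookkeeping in the conversion step: one must group $\gamma^y_j$ against the remaining pair $\beta_j+\gamma^u_j$ in the first application of Lemma~\ref{lemma:plus-extension}, so that the $id+\lambda_j$ inflation falls on $\gamma^y_j$ alone while the residual $\lambda_j^{-1}$ terms are absorbed into a $\mathcal{KL}$ summand and a $\mathcal{K}$ summand without feeding back into the loop gain; any other grouping would alter \eqref{eq:coro-uios-plus}. Everything else is routine.
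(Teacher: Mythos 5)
Your proposal is correct and follows exactly the route the paper indicates: the paper states that Corollary~\ref{corollary:uioc-plus} follows from Theorem~\ref{theorem:uioc} together with Lemma~\ref{lemma:plus-extension}, and your derivation supplies precisely that step, grouping $\gamma^y_j$ against $\beta_j+\gamma^u_j$ so that the $(id+\lambda_j)$ inflation lands only on the loop gain and the composite gain $\tilde{\gamma}^y_1\circ\tilde{\gamma}^y_2$ matches \eqref{eq:coro-uios-plus}. Your auxiliary observations (that $\tilde\beta_j\in\mathcal{KL}$, $\tilde\gamma^y_j,\tilde\gamma^u_j\in\mathcal{K}$, and that \eqref{eq:coro-uios-plus} implies $\gamma^y_1\circ\gamma^y_2(s)<s$, keeping the Banach fixed-point argument intact) are also correct.
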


We can further apply Theorem~\ref{theorem:uioc} and Lemma~\ref{lemma:plus-extension} to system~\eqref{eq:cl-sys} with $y_j(k) = x_j(k)$. In this case, Theorem~\ref{theorem:uioc} ensures the UISC property of system~\eqref{eq:cl-sys}, leading to the following Corollary.
\begin{corollary} \label{corollary:uisc}
Consider a well-posed system~\eqref{eq:cl-sys} with $y_j(k) = x_j(k)$ and UC subsystems. For any inputs $u_j \in l^\infty_{n_{u_j}}, v_j \in l^\infty_{n_{v_j}}$, let $x_j^*$ be the corresponding reference state solutions. For any other inputs $\overline{u}_j, \overline{v}_j$ with $\overline{u}_j \in l^{\infty}_{n_{u_j}}$, let $\overline{x}_j$ be any corresponding solutions with initial conditions $\overline{x}_j(k_0)$. Suppose that there exists $\beta_j \in \mathcal{KL}$ and $\gamma^y_j, \gamma^u_j \in \mathcal{K}$ such that, for all $k_0 , k \in \mathbb{Z}$, $k \geq k_0$ and any $\overline{x}_j(k_0) \in \mathbb{R}^{n_{x_j}}$,
\begin{equation} \label{eq:coro-uisc-plus}
\begin{split}
& \| x^*_j(k) -  \overline{x}_j(k) \| \leq \beta_j( \| x^*_j(k_0) - \overline{x}_j(k_0)\|, k - k_0)  \\
& + \gamma_j^y( \|(v_j - \overline{v}_j)_{[k_0, k-1]}\|) + \gamma^u_j(\| (u_{j} - \overline{u}_j)_{[k_0, k-1]}\|).
\end{split}
\end{equation}
If \eqref{eq:coro-uios-plus} holds, then for any $k_0 \in \mathbb{Z}$ and $x(k_0) \in \mathbb{R}^{n_{x_1} + n_{x_2}}$, the closed-loop solution is bounded, i.e., $\sup_{k \geq k_0}\|x(k)\| < \infty$. Further, system~\eqref{eq:cl-sys} is UISC.
\end{corollary}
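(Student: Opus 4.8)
The plan is to obtain Corollary~\ref{corollary:uisc} from Corollary~\ref{corollary:uioc-plus} specialized to $y_j(k) = x_j(k)$, and then to sharpen the time window in the resulting incremental bound from $[k_0,k]$ to the $[k_0,k-1]$ appearing in the UISC definition \eqref{eq:defn-uisc}. The last step is where the real work lies; the reduction itself is bookkeeping.

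First I would note that, when $y_j = x_j$, a UC subsystem is exactly a UOC subsystem in the sense of Definition~\ref{defn:uoc}: condition (ii) of UOC becomes condition (ii) of UC verbatim, and boundedness of the reference output coincides with boundedness of the reference state solution. Next I would check the hypotheses of Corollary~\ref{corollary:uioc-plus} with $y_j = x_j$. Since the supremum over $[k_0,k-1]$ never exceeds the supremum over $[k_0,k]$ and all gains are increasing, the sum-form bound \eqref{eq:coro-uisc-plus} immediately implies \eqref{eq:alternative-uioc} (with $y_j = x_j$) on the window $[k_0,k]$; and bounding a sum of three nonnegative terms by three times their maximum and then absorbing the factor $3$ into the gains turns \eqref{eq:coro-uisc-plus} into \eqref{eq:theorem-uioc-x} on $[k_0,k-1]$ with $\sigma_j(s) = 3\beta_j(s,0) \in \mathcal{K}$, $\sigma^y_j = 3\gamma^y_j$, $\sigma^u_j = 3\gamma^u_j$ (alternatively one can split the sum with Lemma~\ref{lemma:plus-extension}). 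As the small-gain hypothesis \eqref{eq:coro-uios-plus} is verbatim that of Corollary~\ref{corollary:uioc-plus}, the latter applies: $\sup_{k \geq k_0}\|x(k)\| < \infty$ and \eqref{eq:cl-sys} is UIOC, which with $y_j = x_j$ means that, for some $\beta \in \mathcal{KL}$ and $\gamma \in \mathcal{K}$, $\|x^*(k) - \overline{x}(k)\| \leq \max\{\beta(\|x^*(k_0) - \overline{x}(k_0)\|, k - k_0), \gamma(\|(u - \overline{u})_{[k_0,k]}\|)\}$ for all $k \geq k_0$, and in particular the closed loop is UC.

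It remains to replace $[k_0,k]$ by $[k_0,k-1]$. The key point is that, with $y_j = x_j$, well-posedness of \eqref{eq:cl-sys} makes the interconnection an honest state recursion $x(k+1) = F(x(k), u(k))$ without an algebraic loop, so for $k > k_0$ the value $x(k)$ of any solution is a function of $x(k_0)$ and $u|_{[k_0,k-1]}$ alone (and the reference $x^*$ satisfies the same recursion on all of $\mathbb{Z}$). Given inputs $u, \overline{u} \in l^\infty_{n_u}$, a solution $\overline{x}$ for $\overline{u}$ with initial condition $\overline{x}(k_0)$, and $k > k_0$, I would introduce the auxiliary input $\tilde{u} \in l^\infty_{n_u}$ that agrees with $\overline{u}$ on $[k_0,k-1]$ and with $u$ off that window, and let $\tilde{x}$ be the solution for $\tilde{u}$ with $\tilde{x}(k_0) = \overline{x}(k_0)$; by the recursion, $\tilde{x}(j) = \overline{x}(j)$ for $k_0 \leq j \leq k$. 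Applying the $[k_0,k]$-bound to the pair $(x^*, \tilde{x})$ and using $\|(u - \tilde{u})_{[k_0,k]}\| = \|(u - \overline{u})_{[k_0,k-1]}\|$ yields \eqref{eq:defn-uisc}; the trivial case $k = k_0$ is covered by closed-loop UC (after, if needed, replacing $\beta$ by its pointwise maximum with the $\mathcal{KL}$ function witnessing UC, so that $\beta(s,0) \geq s$).

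I expect the only delicate step --- and the reason this is not a verbatim specialization of Theorem~\ref{theorem:uioc} --- to be the interval sharpening, which hinges on the absence of an algebraic loop when $y_j = x_j$ and on re-routing the comparison through $\tilde{u}$. The UC $=$ UOC identification, the sum-to-max conversion, and checking that $\tilde{u}$ remains bounded are routine.
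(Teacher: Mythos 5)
Your proof is correct, and your route (reduce to Corollary~\ref{corollary:uioc-plus} specialized to $y_j = x_j$) is what the paper implicitly suggests, since the paper gives no explicit proof and simply asserts that Theorem~\ref{theorem:uioc} ``ensures the UISC property.'' Where you add genuine value is the interval-sharpening step. A direct application of Theorem~\ref{theorem:uioc} or Corollary~\ref{corollary:uioc-plus} with $y_j = x_j$ only yields a bound with $\gamma(\|(u-\overline{u})_{[k_0,k]}\|)$, because the UIOC conclusion inherits the $[k_0,k]$ window from the UIOS small-gain result (Theorem~\ref{theorem:uios}). The UISC definition \eqref{eq:defn-uisc}, however, demands $[k_0,k-1]$, so there is a real step to be closed. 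Your observation that with $y_j = x_j$ the interconnection is an algebraic-loop-free state recursion, so that $x(k)$ depends only on $x(k_0)$ and $u|_{[k_0,k-1]}$, and your $\tilde{u}$-splicing argument (which shows $\tilde{x}(j) = \overline{x}(j)$ on $[k_0,k]$ while $\|(u-\tilde{u})_{[k_0,k]}\| = \|(u-\overline{u})_{[k_0,k-1]}\|$) is a clean and valid way to close it; the $k = k_0$ boundary case is correctly handled by adjoining a term like $s/(1+t)$ to $\beta$ (or the UC $\mathcal{KL}$ bound at $t=0$) so that $\beta(s,0) \geq s$. The reduction itself is also sound: with $y_j = x_j$ the UC subsystems are UOC, the $[k_0,k-1]$ sum bound \eqref{eq:coro-uisc-plus} implies the $[k_0,k]$ sum form \eqref{eq:alternative-uioc}, and bounding the sum by three times its max (or applying Lemma~\ref{lemma:plus-extension} twice) gives \eqref{eq:theorem-uioc-x} with $\sigma_j = 3\beta_j(\cdot,0)$, $\sigma^y_j = 3\gamma^y_j$, $\sigma^u_j = 3\gamma^u_j$; the small-gain condition is verbatim \eqref{eq:coro-uios-plus}. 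In short: same route as intended by the paper, but you have noticed and repaired a window discrepancy the paper's terse statement glosses over.
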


\section{Applications}
\label{sec:applications}
This section demonstrates potential applications of Theorem~\ref{theorem:uioc} on observer design and black-box system identification. 

\subsection{Observer-based controller design} 
\label{subsec:obs}
The conventional observer-based controller design approach first finds a desired solution of the closed-loop system $\tilde{x}(k)$. It then ensures that for any other solution $x(k)$, $z(k) = x(k) - \tilde{x}(k)$ is asymptotically stable. Since $z(k)$ typically depends on $\tilde{x}(k)$, it can be difficult to analyze the asymptotic stability of $z(k)$. Recently, the convergence approach has been applied to observer-based control in continuous-time, which may circumvent the cumbersome stability analysis in the conventional approach \cite{pavlov2005convergent}. The convergence approach first designs an observer and a feedback controller so that the closed-loop system is UC. This ensures that the closed-loop system induces a well-posed I/O map and the internal states are bounded. Secondly, a feedforward controller is designed to shape the closed-loop system's response. Here, we employ Theorem~\ref{theorem:uioc} to achieve the first step in the convergence approach.

For $k \in \mathbb{Z}$, consider a nonlinear plant
\begin{equation*} 
z(k+1) = f(z(k), u(k), w(k)), \quad y(k) = h(z(k)),
\end{equation*}
with state $z(k) \in \mathbb{R}^{n_z}$, control $u(k) \in \mathbb{R}^{n_u}$, external input $w \in l^\infty_{n_w}$ and output $y(k) \in \mathbb{R}^{n_y}$. Construct an observer,
\begin{equation*} 
\hat{z}(k+1) = f(\hat{z}(k), u(k), w(k)) - L(h(\hat{z}(k)) - y(k)),
\end{equation*}
where $u(k) = \phi(\hat{z}(k))$ and $L(s)=0$ if $s = 0$. In general, $L$ can be a nonlinear function. Let $\Delta z(k) = \hat{z}(k) - z(k)$, then $u(k) = \phi(\Delta z(k) + z(k))$ and the observer error dynamics is
\begin{equation*}
\begin{split}
\Delta z(k+1)  & = \hat{f}(\Delta z(k), z(k), w(k)) \\
& = f(\Delta z(k) + z(k), u(k), w(k)) \\ 
& \hspace*{-3em} - f(z(k), u(k), w(k))  - L(h(\Delta z(k) + z(k)) - h(z(k))), \\
\end{split}
\end{equation*}
where $z(k), w(k)$ are viewed as inputs to the error dynamics. Consider the interconnected system~\eqref{eq:app-cl} (see Fig.~\ref{fig2}),
\begin{equation} \label{eq:app-cl}
\begin{split}
& \begin{cases}
\hspace*{0.8em} z(k+1)  = f(z(k), \phi(v_1(k) + z(k)), w(k)), \\
\Delta z(k+1) = \hat{f}(\Delta z(k), v_2(k), w(k)),
\end{cases} \\
& \hspace*{3.15em} v_1(k) = \Delta z(k), \quad  v_2(k) = z(k),
\end{split}
\end{equation}
where $w$ is the input. Our goal is to employ Theorem~\ref{theorem:uioc} to design the observer gain $L(\cdot)$ and the controller $u(k) = \phi(\hat{z}(k))$ such that the closed-loop system is UISC. To achieve this goal, we employ the following Corollary of Theorem~\ref{theorem:uioc} to system~\eqref{eq:app-cl}, whose full proof is given in Appendix~\ref{app:obs_coro}.

\begin{figure}[!ht]
\centering
\includegraphics[scale=0.9]{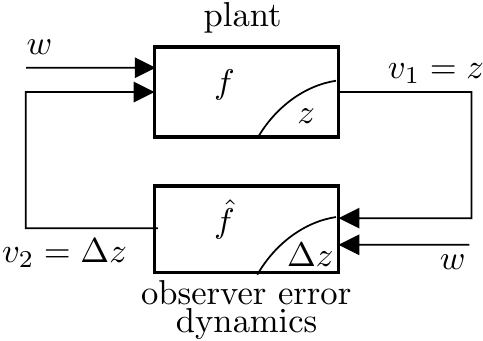}
\caption{Schematic of the closed-loop system~\eqref{eq:app-cl} consisting of the plant and the observer error dynamics.}
\label{fig2}
\end{figure}

\begin{corollary} \label{corollary:obs}
Consider a well-posed system~\eqref{eq:app-cl}. Suppose that for any inputs $v_2, w$ with $w \in l^{\infty}_{n_w}$, there exists $\beta_2 \in \mathcal{KL}$ such that, for any $k, k_0 \in \mathbb{Z}, k \geq k_0$ and $\Delta z(k_0) \in \mathbb{R}^{n_{\Delta z}}$,
\begin{equation} \label{eq:coro-uioc-obs}
\| \Delta z(k) \| \leq \beta_2( \| \Delta z(k_0) \|, k-k_0).
\end{equation}
Suppose that the $z$-subsystem is UC and let $z^*$ be the reference solution to $v_1, w$. For any other input $\overline{v}_1, \overline{w}$ with $\overline{w} \in l^{\infty}_{n_w}$, let $\overline{z}$ be any solution. Suppose that there exists $\beta_1 \in \mathcal{KL}, \gamma_1^y, \gamma_1^w \in \mathcal{K}$ such that for any $k_0, k \in \mathbb{Z}, k \geq k_0$ and $\overline{z}(k_0) \in \mathbb{R}^{n_z}$,
\begin{equation} \label{eq:coro-uioc-obs-x1}
\begin{split}
& \| z^*(k) - \overline{z}(k) \| \leq  \beta_1(\|z^*(k_0) - \overline{z}(k_0)\|, k-k_0) \\
& +   \gamma^y_1(\| (v_1 - \overline{v}_1)_{[k_0, k-1]} \|) + \gamma^w_1(\| (w - \overline{w})_{[k_0, k-1]} \|).
\end{split}
\end{equation}
It follow that system~\eqref{eq:app-cl} is UISC.
\end{corollary}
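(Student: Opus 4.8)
The plan is to deduce the statement from Corollary~\ref{corollary:uisc}, reading \eqref{eq:app-cl} as an instance of \eqref{eq:cl-sys} with $y_j(k)=x_j(k)$: the $z$-subsystem is subsystem $j=1$ with state $x_1=z$, interconnection input $v_1=\Delta z$ and external input $u_1=w$, and the observer-error subsystem is subsystem $j=2$ with state $x_2=\Delta z$, interconnection input $v_2=z$ and external input $u_2=w$. The interconnection feeds back the states themselves, so there is no algebraic loop and the well-posedness required by Corollary~\ref{corollary:uisc} is inherited from the standing hypothesis on \eqref{eq:app-cl}. The decisive observation is that \eqref{eq:coro-uioc-obs} bounds $\Delta z$ with no dependence whatsoever on $v_2$ or $w$, so the interconnection gain $\gamma_2^y$ of subsystem $j=2$ may be chosen as small as we wish; this is exactly what makes the cyclic small-gain condition \eqref{eq:coro-uios-plus} attainable.

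First I would pin down the reference state solution of the $\Delta z$-subsystem. Since $L(0)=0$, one has $\hat f(0,z,w)=f(z,\phi(z),w)-f(z,\phi(z),w)-L(0)=0$, so $\Delta z(k)\equiv 0$ is a solution for every admissible $(v_2,w)$; it is bounded on $\mathbb{Z}$, and letting $k_0\rightarrow-\infty$ in \eqref{eq:coro-uioc-obs} forces any solution bounded on all of $\mathbb{Z}$ to vanish identically. Hence $\Delta z^*\equiv 0$ is the unique reference state solution, and \eqref{eq:coro-uioc-obs} with $\Delta z^*(k_0)=0$ shows the $\Delta z$-subsystem is UC. Moreover, for any other solution $\overline{\Delta z}$, $\|\Delta z^*(k)-\overline{\Delta z}(k)\|=\|\overline{\Delta z}(k)\|\leq\beta_2(\|\Delta z^*(k_0)-\overline{\Delta z}(k_0)\|,k-k_0)$, so subsystem $j=2$ satisfies \eqref{eq:coro-uisc-plus} with $\beta_2$ and with $\gamma_2^y,\gamma_2^u$ taken to be any $\mathcal{K}$ functions (say $\gamma_2^u=id$, and $\gamma_2^y$ to be fixed below). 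Subsystem $j=1$ is UC by hypothesis and satisfies \eqref{eq:coro-uisc-plus}: this is precisely \eqref{eq:coro-uioc-obs-x1} with $\gamma_1^u=\gamma_1^w$.

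It remains to secure the small-gain condition \eqref{eq:coro-uios-plus}. Take $\lambda_1=\lambda_2=id$, so that it suffices to find $\gamma_2^y\in\mathcal{K}$ with $2\,\gamma_1^y(2\gamma_2^y(s))<s$ for all $s>0$. Put $\widehat\gamma_1(t):=\gamma_1^y(t)+t$, which lies in $\mathcal{K}_{\infty}$, and set $\gamma_2^y(s):=\frac{1}{2}\,\widehat\gamma_1^{-1}(s/3)\in\mathcal{K}$. Then $2\gamma_2^y(s)=\widehat\gamma_1^{-1}(s/3)$, hence $\gamma_1^y(2\gamma_2^y(s))\leq\widehat\gamma_1(\widehat\gamma_1^{-1}(s/3))=s/3$ and $2\,\gamma_1^y(2\gamma_2^y(s))\leq 2s/3<s$. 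With \eqref{eq:coro-uios-plus} in force, Corollary~\ref{corollary:uisc} applies to \eqref{eq:app-cl} with $u_1=u_2=w$, yielding boundedness of the closed-loop solution and the UISC estimate \eqref{eq:defn-uisc}; specialised to $u_1=u_2=w$ this is UISC with respect to $w$, which is the claim.

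The only genuinely nontrivial step is the first one: recognising that the error dynamics has the origin as its (unique) reference solution --- which is where $L(0)=0$ enters --- and that the absence of an interconnection gain in \eqref{eq:coro-uioc-obs} collapses one factor of the cyclic small-gain condition so that it can always be met by shrinking $\gamma_2^y$. The remaining verifications of \eqref{eq:coro-uisc-plus} for each subsystem and the comparison-function construction of $\gamma_2^y$ are routine.
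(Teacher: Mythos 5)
Your proof is correct and follows the same strategy as the paper's: identify $\Delta z^*\equiv 0$ as the unique reference state solution of the error subsystem, exploit the fact that \eqref{eq:coro-uioc-obs} carries no dependence on $v_2$ or $w$ so that $\gamma_2^y$ may be chosen freely, and shrink $\gamma_2^y$ to satisfy \eqref{eq:coro-uios-plus} before invoking Corollary~\ref{corollary:uisc}. The only cosmetic differences are that the paper cites a result from the literature for $\Delta z^*\equiv 0$ where you derive it directly from $L(0)=0$ and the $k_0\to-\infty$ limit, and the paper's choice of $\gamma_2^y$ is $((id+\lambda_1)\circ(id+\gamma_1^y)\circ(id+\lambda_2))^{-1}$ for arbitrary $\lambda_1,\lambda_2$ whereas you fix $\lambda_1=\lambda_2=id$ and build $\gamma_2^y$ from $\tfrac{1}{2}(id+\gamma_1^y)^{-1}(\cdot/3)$; both constructions do the same job.
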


As a concrete example, we employ Corollary~\ref{corollary:obs} to a design observer-based controller for a Lur'e system with a globally Lipschitz nonlinearity modified from \cite[Example~1]{ibrir2007circle},
\begin{equation*}
z(k+1) = A z(k) + B_u u(k) + B_w w(k) + \rho G \sin(H z(k)),
\end{equation*}
with output $y(k)  = C z(k)$, $\rho = 0.1$ and 
\begin{equation*}
\begin{split}
& A = \begin{bmatrix} 1 & 1 \\ 0 & 1.1\end{bmatrix}, B_u = \begin{bmatrix} 1 \\ 1\end{bmatrix}, B_w = \begin{bmatrix} -0.5 \\ 1 \end{bmatrix}, \\
& C = \begin{bmatrix}0.1 & 0.5 \end{bmatrix}, G = \begin{bmatrix} 0.5 \\ 1\end{bmatrix}, H = \begin{bmatrix} 1 & 1\end{bmatrix}.
\end{split}
\end{equation*}
Here, for any $a, \hat{a} \in \mathbb{R}^2$, we have
\begin{equation*}
\| \rho G \sin(H a) - \rho G \sin(H \hat{a}) \| \leq \|\rho GH (a - \hat{a}) \|.
\end{equation*}
Consider a Luenberger observer with gain $L = P^{-1} Z \in \mathbb{R}^{2}$,
\begin{equation*}
\begin{split}
\hat{z}(k+1) = & A \hat{z}(k) + B_u u(k) + B_w w(k) \\
& + \rho G \sin(H\hat{z}(k)) - P^{-1} Z C(\hat{z}(k) - z(k)),
\end{split}
\end{equation*}
where $P > 0$. In Appendix~\ref{app:obs_lmi}, we show that the observer error dynamics $\Delta z(k)  = \hat{z}(k) - z(k)$ satisfies \eqref{eq:coro-uioc-obs} if there exists $Z \in \mathbb{R}^2$, $P>0$, $\epsilon>0$ and $\theta \in (0, 1)$ such that
\begin{small}
\begin{equation} \label{eq:lmi}
\begin{split}
P - \epsilon I & < 0, \\
\begin{bmatrix}
-\theta P & A^\top P - C^\top Z^\top & \epsilon \rho (GH)^\top & A^\top P - C^\top Z^\top \\
PA - ZC & -P & 0 & 0 \\
\epsilon \rho GH & 0 & -\epsilon I & 0 \\
PA - ZC & 0 & 0 & P - \epsilon I
\end{bmatrix} & \leq 0.
\end{split}
\end{equation}
\end{small}

Consider a linear state-feedback law $u(k) = - K \hat{z}(k)$ with gain $K^\top \in \mathbb{R}^{2}$. The plant subject to $u(k)$ becomes
\begin{equation} \label{eq:app-lip-fb}
\begin{split}
z(k+1) & = (A - B_u K) z(k) + \rho G \sin(Hz(k)) \\
& \hspace*{1em} - B_u K \Delta z(k) + B_w w(k) \\
& \coloneqq \tilde{f}(z(k), \Delta z(k), w(k)),
\end{split}
\end{equation}
where $\Delta z(k), w(k)$ are viewed as inputs. Let $z(k), \overline{z}(k)$be any solutions starting at $z(k_0), \overline{z}(k_0)$ to inputs $\Delta z(k), w$ and $\Delta \overline{z}(k), \overline{w}$, respectively. Let $\delta \overline{z}(k) = z(k) - \overline{z}(k)$, then 
\begin{equation} \label{eq:obs1}
\begin{split}
\| \delta \overline{z}(k) \|  & \leq  \lambda_s \| \delta \overline{z}(k-1) \| \\
& + \sigma_{\max}(B_u K) \|\Delta z(k-1) - \Delta \overline{z}(k-1)\| \\
& + \|B_w\|  |w(k-1) - \overline{w}(k-1)|,
\end{split}
\end{equation}
where $\lambda_s = \sigma_{\max}(A - B_u K) + \rho \sigma_{\max} (GH) $. We employ \cite[Theorem~1]{pavlov2008convergent} to show that the plant~\eqref{eq:obs1} is UC. Firstly, consider $\Delta z = \Delta \overline{z}, w = \overline{w}$. From \eqref{eq:obs1}, we have $\| \delta \overline{z}(k) \| \leq \lambda_s \| \delta \overline{z}(k-1)\|$. Further, note that for any $\Delta z \in l^{\infty}_{2}$ and $w \in l^{\infty}_1$, we have
\begin{equation*}
\begin{split}
& \sup_{k \in \mathbb{Z}}\|\tilde{f}(0, \Delta z(k), w(k))\| \\
& \leq \| B_w \| \|w\|_{\infty}  + \sigma_{\max}(B_u K) \| \Delta z \|_{\infty} < \infty.
\end{split}
\end{equation*}
By \cite[Theorem~1]{pavlov2008convergent}, if there exists $K$ such that $\lambda_s < 1$, then the plant~\eqref{eq:app-lip-fb} is UC.

Furthermore, the condition $\lambda_s < 1$ also ensures that the plant satisfies \eqref{eq:coro-uioc-obs-x1} in Corollary~\ref{corollary:obs}. Finally, applying Corollary~\ref{corollary:obs} shows that the closed-loop system~\eqref{eq:app-cl} is UISC.

\begin{example} \label{example:lure-obs}
Choose $L^{\top}$ $=\begin{bmatrix} 2.3258 & 2.1104 \end{bmatrix}$ and $K = \begin{bmatrix} 0.4956 & 1.006 \end{bmatrix}$, then $\lambda_s = 0.8687$ and the linear matrix inequalities \eqref{eq:lmi} hold for $\theta = \epsilon = 0.001$. Hence, the closed-loop system~\eqref{eq:app-cl} is UISC. The UISC property ensures that all solutions $z(k)$ of the controlled plant to an input $w$ asymptotically converge to the reference state solution $z^*$, independent of initial condition $z(k_0)$; see Fig.~\ref{fig3}.

\begin{figure}[!ht]
\centering
\includegraphics[scale=1, trim={0.em 0.em 0em 0em},clip]{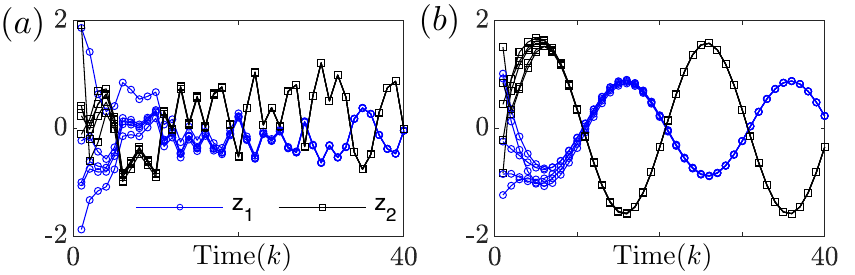}
\caption{The controlled plant states $z(k) = (z_1(k), z_2(k))$ under (a) $w(k)$ is independently and uniformly sampled from $[-1, 1]$, (b) $w(k) = \sin(0.1 \pi k)$. (a) and (b) show the convergence of solutions for 5 random initial conditions.}
\label{fig3}
\end{figure}
\end{example}
\subsection{Interconnected RCs for system identification} \label{subsec:sysid}
Nonlinear closed-loop model structures for black-box system identification, such as the Wiener-Hammerstein feedback models, have been proposed to better capture nonlinear feedback phenomena of the unknown system \cite{giri2010block,schoukens2017identification}. Here we introduce interconnected RCs as candidate models. When identification is entirely based on the I/O data, the closed-loop RC is required to be UOC (or UC for state-feedback interconnections, see Sec~\ref{subsec:esn-sys-id}), so that the estimated outputs for large times are determined by the inputs but not by the RC's initial condition. The internal RC parameters are arbitrary but fixed at the onset, as long as the closed-loop RC is UOC (or UC, see Sec.~\ref{subsec:esn-sys-id}). Only the RC's output function is optimized to approximate the target output data. 

Suppose that we have inputs $w_l(k), w'_{l'}(k) \in \mathbb{R}$ and their corresponding outputs $y_l(k), y'_{l'}(k) \in \mathbb{R}$ of the unknown system for $1 \leq k \leq L$, $1 \leq l \leq M$ and $1 \leq l' \leq M'$. I/O data $w_l(k), y_l(k)$ are for parameter estimation (using $l=1, \ldots, M_1$) and model selection (using $l=M_1+1, \ldots, M$, based on Akaike's final prediction error \cite{ljung1998system}). I/O data $w'_{l'}(k), y'_{l'}(k)$ are for model evaluation. For each $l$ and $l'$, we first washout the effect of RC's initial condition for $k=1,\ldots,L_w$. Let $\hat{y}_l(k), \hat{y}_{l'}(k)$ be the RC's outputs under inputs $w_l(k), w_{l'}(k)$, respectively. To optimize the RC output function, we minimize $\sum_{l=1}^{M_1} \sum_{k=L_w+1}^{L} |y_l(k) - \hat{y}_l(k)|^2$. We estimate the model order based on ${\rm FPE}_l$, computed as
$${\rm FPE}_l = \frac{1}{L-L_w} \sum_{k=L_w+1}^{L} |y_l(k) - \hat{y}_l(k)|^2 \frac{L-L_w+p}{L-L_w-p},$$
where $p$ is the number of RC output parameters. For each $p$, we randomly generate $N$ RCs and select a model out of $N p$ models with the minimum ${\rm FPE} \coloneqq \sum_{l=M_1+1}^{M} {\rm FPE}_l$.
For each $l' = 1, \ldots, M'$, the selected model is assessed using 
${\rm MSE}_{l'} = \frac{1}{L - L_w} \sum_{k=L_w+1}^{L} |y'_{l'}(k) - \hat{y}'_{l'}(k)|^2.$
We employ interconnected ESNs and QRCs to emulate the feedback-controlled Lur'e system in Sec.~\ref{subsec:obs}. We set $L_w = 500$, $L=1500$, $N=10$, $M = 10, M_1=8$ and $M' = 2$, with inputs $w_l(k)$ and $w'_1(k)$ sampled uniformly over $[-2, 2]$, independently for each $k$ (persistently exciting \cite{ljung1998system} with an order of $50$ estimated by the `pexcit' Matlab command), whereas $w'_2(k)= \sin(2 \pi k / 25) + \sin(\pi k / 5)$ as in \cite{kumpati1990identification}.

\subsubsection{Echo-state networks (ESNs)} \label{subsec:esn-sys-id}
Consider state-feedback interconnected ESNs with subsystems $j=1,2$ of the form,
\begin{equation} \label{eq:ESNs}
\begin{split}
x_j(k+1) = \tanh(A_j x_j(k) + A^{fb}_j v_j(k) + B_j w(k)), \\
\end{split}
\end{equation}
where $v_1(k) = x_2(k) \in \mathbb{R}^{n_{x_2}}, v_2(k) = x_1(k) \in \mathbb{R}^{n_{x_1}}$, $w \in l^{\infty}_1$ is the input and $\tanh(\cdot)$ is applied to a vector elementwise. We choose an output $\hat{y}(k) = W^\top_1 x_1(k) + W_2^\top x_2(k) + \zeta $, where $W_j \in \mathbb{R}^{n_{x_j}}$ and $\zeta \in \mathbb{R}$ is a bias term.  The output parameters $W_1, W_2, \zeta$ are optimized via ordinary least squares. The UC property of each ESN is guaranteed by choosing $\sigma_{\max}(A_j) < 1$ and noticing its compact state-space \cite[Theorem~13]{tran2018convergence}. We apply Corollary~\ref{corollary:uisc} to establish the UISC property for the interconnected system~\eqref{eq:ESNs}. For any $k, k_0 \in \mathbb{Z}$, let $x_j(k), \overline{x}_j(k)$ be any solutions to \eqref{eq:ESNs} under inputs $v_j, w$ and $\overline{v}_j, \overline{w}$ respectively. Let $\delta \overline{x}_j(k) = x_j(k) - \overline{x}_j(k)$, $\delta \overline{v}_j = v_j - \overline{v}_j$ and $\delta \overline{w} = w - \overline{w}$, then \eqref{eq:coro-uisc-plus} in Corollary~\ref{corollary:uisc} is satisfied since
\begin{equation*}
\begin{split}
& \| \delta \overline{x}_j(k) \| \leq \sigma_{\max}(A_j)^{k-k_0} \| \delta \overline{x}_j(k_0) \| \\
& + \frac{\sigma_{\max}(A^{fb}_j)}{1 - \sigma_{\max}(A_j)} \| \delta \overline{v}_{j_{[k_0, k-1]}} \| + \frac{\| B_j\| }{1 - \sigma_{\max}(A_j)} | \delta \overline{w}_{[k_0, k-1]}|.
\end{split}
\end{equation*}
In Corollary~\ref{corollary:uisc}, choose $\lambda_1(s) = \lambda_2(s) = \lambda s$ for some $\lambda > 0$. The closed-loop system~\eqref{eq:ESNs} is UISC if
\begin{equation} \label{eq:ESNs-SG}
\frac{\sigma_{\max}(A^{fb}_1)}{1 - \sigma_{\max}(A_1)} \frac{\sigma_{\max}(A^{fb}_2)}{1 - \sigma_{\max}(A_2)} <  \frac{1}{(1+\lambda)^2}.
\end{equation}

\begin{example}
We consider interconnected ESN~\eqref{eq:ESNs} with $n_{x_1} = n_{x_2} \in \{2, \ldots,5\}$ (i.e., $p=2n_{x_1}+1$) to model the feedback-controlled Lur'e system in Sec.~\ref{subsec:obs}. For each ESN, elements of $A_j, A^{fb}_j, B_j$ are sampled independently and uniformly over $[-1, 1]$. We fix $\sigma_{\max}(A_j) = 0.5$, $\sigma_{\max}(A^{fb}_2)$, and scale $\sigma_{\max}(A^{fb}_1)$ so that \eqref{eq:ESNs-SG} holds. The minimum FPE achieved is ${\rm FPE}=0.0023$, with $n_{x_1}=n_{x_2}=4$ and $p=9$. For this selected ESN, $\sigma_{\max}(A^{fb}_1) = 0.15, \sigma_{\max}(A^{fb}_2) = 1.65$ and \eqref{eq:ESNs-SG} holds for $\lambda=0.003$. This results in $\text{MSE}_{1}=0.0012$ and $\text{MSE}_{2}=0.0028$ corresponding to the evaluation data $l'=1, 2$, respectively. See Fig.~\ref{fig4} for the target outputs $y'_{l'}(k)$ and the closed-loop ESN outputs $\hat{y}'_{l'}(k)$. 
\begin{figure}[!ht]
\centering
\includegraphics[scale=1]{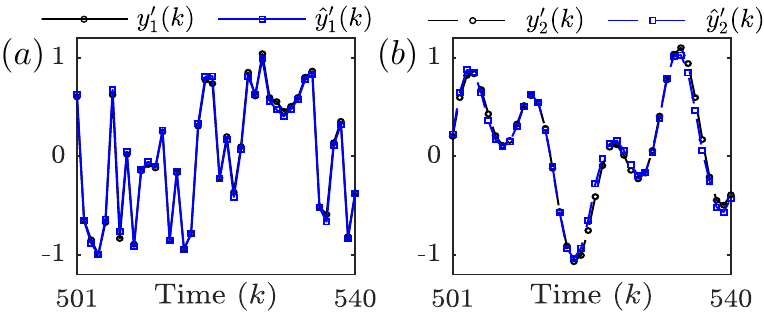}
\caption{Target outputs $y'_{l'}(k)$ and the closed-loop ESN outputs $\hat{y}'_{l'}(k)$ for $k=501, \ldots, 540$ with (a) $l'=1$ under a uniform random input $w'_1(k)$ and (b) $l'=2$ under a sum of sinusoidals $w'_2(k) = \sin(2\pi k /25) + \sin(\pi k /5)$.} 
\label{fig4}
\end{figure}
\end{example}

\subsubsection{Quantum reservoir computers (QRCs)} \label{subsec:qrc-sys-id}
We consider RCs realized by quantum dynamical systems for system identification \cite{fujii2017harnessing,CN19,chen2020temporal}. An $n$-qubit quantum system is described by a $2^n \times 2^n$ positive semidefinite Hermitian matrix $\rho$ with trace ${\rm Tr}(\rho) = 1$. A matrix $\rho$ satisfying the above properties is referred to as a density operator. We consider quantum systems evolving according to $\rho(k+1) = \mathcal{T}(w(k)) \rho(k)$, where $\mathcal{T}(w(k))$ is a completely positive trace-preserving (CPTP) map \cite{NC10} determined by input $w(k)$. A CPTP map sends a density operator to another density operator. A natural norm choice for density operators is the Schatten 1-norm, defined as $\| A \|_1 \coloneqq  {\rm Tr}(\sqrt{A^\dagger A})$ for any complex matrix $A$ and its conjugate transpose $A^\dagger$. We let $\|\mathcal{T}\|_{1-1} \coloneqq \sup_{\| A \|_1 = 1} \|\mathcal{T}(A)\|_1$ be the operator norm induced by the Schatten 1-norm. 

Consider an interconnected QRC (also see Fig.~\ref{fig5}),
\begin{equation} \label{eq:QRCs}
\begin{split}
& \begin{cases}
\rho_j(k+1) = \mathcal{T}_j(w(k), v_j(k)) \rho_j(k) + \epsilon^{(j)}_{\phi} \phi_j, \\
\hspace*{1.8em} \hat{y}_j(k) = \sum_{i=1}^{n_j} {\rm Tr}(Z_i \rho_j(k)),
\end{cases} \\
& \hspace*{5em} v_1(k) = \hat{y}_2(k), \ v_2(k) = \hat{y}_1(k),
\end{split}
\end{equation}
for $j=1,2$. Here $\mathcal{T}_j(w(k), v_j(k))= \epsilon^{(j)}_w \mathcal{T}^{(j)}_{w}(w(k)) + \epsilon^{(j)}_v \mathcal{T}^{(j)}_v (v_j(k))$, $\epsilon^{(j)}_w + \epsilon^{(j)}_v + \epsilon^{(j)}_{\phi} = 1$ and $\epsilon^{(j)}_w, \epsilon^{(j)}_v, \epsilon^{(j)}_{\phi} > 0$. Subsystem $j$ has $n_j$ qubits so that $\rho_j(k)$ and $\phi_j$ are two $2^{n_j} \times 2^{n_j}$ density operators, with $\phi_j$ being fixed. Here, $Z_i$ is the Pauli-$Z$ operator acting on qubit $i$, where Pauli-$Z$ is a $2 \times 2$ diagonal matrix with diagonal elements $1, -1$. For $S \in \{w, v\}$, the input-dependent CPTP maps are
\begin{equation*}
\mathcal{T}^{(j)}_S (x) \rho_j(k) = \left[g(x) T^{(j)}_{S, 1} + (1-g(x)) T^{(j)}_{S, 2}\right] \rho_j(k),
\end{equation*}
where $g(x) = 1 / (1+\exp(-x))$ is the logistic function with a globally Lipschitz constant $L_g = 1/4$ and $T^{(j)}_{S, 1}, T^{(j)}_{S, 2}$ are input-independent CPTP maps. We choose the output of the closed-loop QRC as 
$$\hat{y}(k) = \sum_{i=1}^{n_1} W_{i}^{(1)} {\rm Tr}(Z_i \rho_1(k)) + \sum_{i=1}^{n_2} W_{i}^{(2)} {\rm Tr}(Z_i \rho_2(k)) + \zeta,$$
where $W^{(j)}_i, \zeta \in \mathbb{R}$ ($j=1,2$ and $i=1,\ldots,n_j$) are the output parameters to be optimized via ordinary least squares.

\begin{figure}[!ht]
\centering
\includegraphics[scale=1]{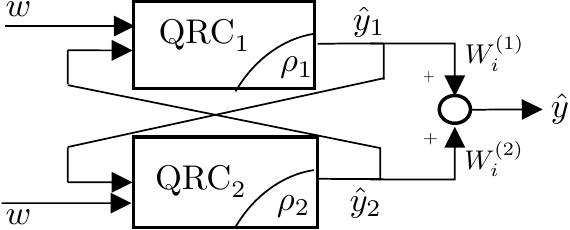}
\caption{Schematic of an interconnected QRC described by \eqref{eq:QRCs}.}
\label{fig5}
\end{figure}

Note that interconnected quantum systems do not generally take the form \eqref{eq:cl-sys}; see \cite{nurdin2009coherent}, \cite{amini2013feedback}, \cite{CKS17} and \cite[Chapter~5]{NY17}. System~\eqref{eq:QRCs} can describe ensembles of identical quantum systems such as NMR ensembles \cite{fujii2017harnessing}, and quantum systems that can emulate such ensembles; e.g., \cite{chen2020temporal,MR09}. Such quantum systems have dynamics constrained by quantum mechanics, but can otherwise be viewed as deterministic systems. Since the quantum subsystems here do not interact quantum mechanically, the composite state $\rho(k)$ for \eqref{eq:QRCs} can be described by the direct sum of the subsystem density operators, $\rho(k) = \rho_1(k) \oplus \rho_2(k)$, as for interconnected classical systems. Consequently, the closed-loop system \eqref{eq:QRCs} is of the form \eqref{eq:cl-sys} and Theorem~\ref{theorem:uioc} is applicable. We remark that Theorem~\ref{theorem:uioc} and its subsequent corollaries also hold for the Schatten 1-norm.

We now employ Corollary~\ref{corollary:uioc-plus} to establish the UIOC of system~\eqref{eq:QRCs}. Note that for any density operators $\rho, \overline{\rho}$ and any CPTP map $\mathcal{T}$, $ \| \mathcal{T}(\rho - \overline{\rho}) \|_1 \leq \| \rho - \overline{\rho}\|_1$ \cite{NC10}. For any $k, k_0 \in \mathbb{Z}$ with $k \geq k_0$, let $\rho_j(k), \overline{\rho}_j(k)$ be any solutions to inputs $v_j, w$ and $\overline{v}_j, \overline{w}$, respectively. Let $\delta \overline{\rho}_j(k) = \rho_j(k) - \overline{\rho}_j(k)$, $\delta \overline{v}_j = v_j - \overline{v}_j$ and $\delta \overline{w} = w - \overline{w}$, we have
\begin{equation} \label{eq:QRCs-bound}
\begin{split}
& \| \delta \overline{\rho}_j(k) \|_1 \leq (\epsilon^{(j)}_w + \epsilon^{(j)}_v)^{k-k_0} \| \delta \overline{\rho}_j(k_0) \|_1 \\
& \qquad + (\epsilon^{(j)}_v / \epsilon^{(j)}_{\phi}) L_g \| T^{(j)}_{v, 1} -  T^{(j)}_{v, 2} \|_{1-1} |\delta \overline{v}_{j_{[k_0, k-1]}}| \\
& \qquad + (\epsilon^{(j)}_w / \epsilon^{(j)}_{\phi}) L_g \| T^{(j)}_{w, 1} -  T^{(j)}_{w, 2} \|_{1-1} |\delta \overline{w}_{[k_0, k-1]}|.
\end{split}
\end{equation}
Furthermore, let $y_j(k), \overline{y}_j(k)$ be the outputs associated to $\rho_j(k), \overline{\rho}_j(k)$. Applying Lemma~\ref{lemma:holder-like} in Appendix~\ref{app:lemma-holder} gives 
\begin{equation} \label{eq:QRCs-bound2}
| y_j(k) - \overline{y}_j(k) | = \left| \sum_{i=1}^{n_j} {\rm Tr}(Z_i \delta \overline{\rho}_j(k) ) \right| \leq n_j \| \delta \overline{\rho}_j(k) \|_1.
\end{equation}

To show that each QRC subsystem is UOC, note that a quantum system admits a compact state-space. Consider $\delta \overline{v}_j = \delta \overline{w}_j = 0$. From \eqref{eq:QRCs-bound}, we have $\| \delta \overline{\rho}_j(k) \| \leq (\epsilon^{(j)}_w + \epsilon^{(j)}_v)^{k-k_0} \| \delta \overline{\rho}_j(k_0) \|_1$ with $\epsilon^{(j)}_w + \epsilon^{(j)}_v < 1$. By \cite[Theorem~13]{tran2018convergence}, there exists a unique bounded reference state solution $\rho^*_j$ to each subsystem in \eqref{eq:QRCs}.
From \eqref{eq:QRCs-bound} and \eqref{eq:QRCs-bound2}, we have $| y_j(k) - \overline{y}_j(k) | \leq n_j (\epsilon^{(j)}_w + \epsilon^{(j)}_v)^{k-k_0} \| \delta \overline{\rho}_j(k_0) \|_1$, and hence UOC of each QRC subsystem.

Upper bounding $\| T^{(j)}_{S, 1} -  T^{(j)}_{S, 2} \|_{1-1} \leq 2$ for $S\in \{w, v\}$ \cite[Theorem~2.1]{perez2006contractivity} in \eqref{eq:QRCs-bound}. Equations~\eqref{eq:QRCs-bound}, \eqref{eq:QRCs-bound2} show that \eqref{eq:theorem-uioc-x}, \eqref{eq:alternative-uioc} in Corollary~\ref{corollary:uioc-plus} hold. Choose $\lambda_1(s) = \lambda_2(s) = \lambda s $ for some $\lambda > 0$ in Corollary~\ref{corollary:uioc-plus}. From \eqref{eq:QRCs-bound}, \eqref{eq:QRCs-bound2}, the closed-loop QRC~\eqref{eq:QRCs} is UIOC if
\begin{equation} \label{eq:QRC-SG}
(4 \epsilon^{(1)}_v \epsilon^{(2)}_v L_g L_g n_1 n_2) / (\epsilon^{(1)}_{\phi} \epsilon^{(2)}_{\phi}) < 1/(1+\lambda)^2.
\end{equation}

\begin{example}
We consider an interconnected QRC \eqref{eq:QRCs} with $n_1 = n_2 \in \{2, \ldots, 5\}$ (i.e., $p=2n_1+1$) to model the feedback-controlled Lur'e system in Sec.~\ref{subsec:obs}. For each QRC, we fix $\epsilon^{(1)}_w = 0.25, \epsilon^{(1)}_v = 0.1, \epsilon^{(1)}_{\phi} = 0.65$ and $\epsilon^{(2)}_w = 0.1, \epsilon^{(2)}_v = 0.45, \epsilon^{(2)}_{\phi} = 0.45$, such that \eqref{eq:QRC-SG} holds for all values of $n_1$ considered here. For $j=1,2$, $\phi_j$ is chosen with its $1,1$-th element $(\phi_j)_{1,1} = 1$ and zero otherwise. Each input-independent CPTP map is governed by a unitary matrix $U^{(j)}_{S, m}$, defined by $T^{(j)}_{S, m}(\rho) = U^{(j)}_{S, m} \rho (U^{(j)}_{S, m})^\dagger$ for $j, m=1, 2$ and $S \in \{w, v\}$. More explicitly, we choose $U^{(1)}_{w, 1} = U^{(1)}_{v, 2} = U^{(2)}_{w, 1} = U^{(2)}_{v, 2} = \bigotimes_{i=1}^{n_1} Z_i$, where $\otimes$ denotes the tensor (Kronecker) product. Other unitaries are $U^{(j)}_{S, m} = \bigotimes_{i=1}^{n_1} e^{-\iota \theta^{S, m, j}_{i} X_i}$, where $\iota = \sqrt{-1}$, $X_i$ is the Pauli-$X$ operator on qubit $i$ and the Pauli-$X$ operator is $\begin{bmatrix} 0 & 1 \\ 1 & 0 \end{bmatrix}$. Parameters $\theta^{S, m, j}_{i}$ are uniformly distributed on $[-\pi, \pi]$, independently for each $S, m, j$ and $i$. The unitaries employed here are simple, more complex unitaries that entangle qubits within a QRC subsystem can also be used; see \cite{fujii2017harnessing,CN19,chen2020temporal}.

The minimum FPE is achieved at ${\rm FPE} = 0.0032$ with $n_1=n_2=5$ and $p=11$, and \eqref{eq:QRC-SG} holds for $\lambda=0.019$. This selected QRC achieves ${\rm MSE}_1={\rm MSE}_2=0.0015$. See Fig.~\ref{fig6} for the QRC outputs $\hat{y}'_{l'}(k)$ against the target outputs $y'_{l'}(k)$ for the evaluation data $l'=1, 2$. 
\begin{figure}[!ht]
\centering
\includegraphics[scale=1]{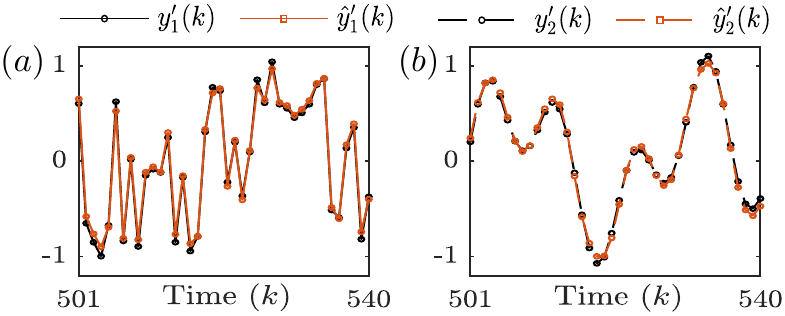}
\caption{Target outputs $y'_{l'}(k)$ and the closed-loop QRC outputs $\hat{y}'_{l'}(k)$ for $k=501, \ldots, 540$ with (a) $l'=1$ under a uniform random input $w'_1(k)$ and (b) $l'=2$ under a sum of sinusoidals $w'_2(k) = \sin(2\pi k /25) + \sin(\pi k /5)$.} 
\label{fig6}
\end{figure}
\end{example}

\section{Conclusion}
\label{sec:conclusion}
We present a small-gain theorem for output-feedback interconnected systems to be uniformly input-to-output convergent systems, as a discrete-time counterpart of the continuous-time results in \cite{besselink2011model}. Our proof is based on a small-gain theorem for time-varying discrete-time systems in the input-to-output stability framework, also derived herein. The latter result bridges the gap between time-invariant and time-varying discrete-time small-gain theorems in the literature \cite{jiang2001input,jiang2005remarks}.

Our small-gain theorems are applicable to important control problems, such as output regulation and tracking \cite{pavlov2006uniform}.  We demonstrate an application of our small-gain theorems to observer-based controller design, illustrated with systems subject to globally Lipschitz nonlinearities that are ubiquitous in mechanical and robotic applications. Detouring from conventional applications, we apply the uniform input-to-output convergence small-gain theorem to design parameters of interconnected reservoir computers for black-box system identification. We introduce interconnected echo-state networks and quantum reservoir computers as candidate models equipped with closed-loop structures and demonstrate numerically their efficacy in modeling a feedback-controlled system.

\appendices
\section{}
\label{app:theorem_uios} 
To prove Theorem~\ref{theorem:uios}, we will apply Lemma~\ref{lemma:ex_beta} below.
\begin{lemma} \label{lemma:ex_beta}
Consider a well-posed system \eqref{eq:cl-sys}. Suppose that for $j=1,2$, there exists $\tilde{\beta}_j \in \mathcal{KL}$, $\tilde{\gamma}^u_j, \tilde{\gamma}^y_j \in \mathcal{K}$ (independent of $u_j, v_j$) with $\tilde{\gamma}^y_j(s) < s$ for all $s>0$ such that,for some $M \in \mathbb{Z}$ with $M \geq 2$, for any input $u \in l^\infty_{n_{u_1}+ n_{u_2}}$, any $k_0 \in \mathbb{Z}$ and $k_1 \in \mathbb{Z}_{\geq 0}$, and any $x(k_0) \in \mathbb{R}^{n_{x_1} + n_{x_2}}$,
\begin{equation} \label{eq:lemma_ex_beta}
\begin{split}
\| y_j (k_0 & + k_1) \|  \leq \max \{ \tilde{\beta}_j( \| x(k_0) \|, k_1), \\
& \tilde{\gamma}^y_j(\| y_{j_{[k_0 + \lfloor k_1 / M \rfloor, k_0 + k_1]}} \|), \tilde{\gamma}^u_j(\| u_{[k_0, k_0 + k_1]}\|) \},
\end{split}
\end{equation}
and 
\begin{equation*}
\sup_{k_1 \in \mathbb{Z}_{\geq 0}} \| y_j(k_0 + k_1) \| < \infty.
\end{equation*}
Then there exists $\hat{\beta}_j \in \mathcal{KL}$ such that, for all $k_0 \in \mathbb{Z}$, $k_1 \in \mathbb{Z}_{\geq 0}$,
\begin{equation*}
\| y_j (k_0 + k_1) \|  \leq \max \{ \hat{\beta}_j( \| x(k_0) \|, k_1),  \tilde{\gamma}^u_j(\| u_{[k_0, k_0 + k_1]}\|) \}.
\end{equation*}
\end{lemma}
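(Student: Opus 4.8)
\textbf{Proof proposal for Lemma~\ref{lemma:ex_beta}.}

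The plan is to absorb the ``self-gain'' term $\tilde\gamma^y_j(\|y_{j_{[k_0+\lfloor k_1/M\rfloor,\,k_0+k_1]}}\|)$ into a decaying $\mathcal{KL}$ estimate by iterating the inequality \eqref{eq:lemma_ex_beta} along a sequence of shrinking time windows. Fix $k_0$ and abbreviate $Y(k_1)=\|y_{j_{[k_0,\,k_0+k_1]}}\|$, $U=\tilde\gamma^u_j(\|u_{[k_0,\,k_0+k_1]}\|)$; note $\sup_{k_1}Y(k_1)<\infty$ by hypothesis. First I would show that, restricting attention to times in a window $[k_0+m,\,k_0+k_1]$ and applying \eqref{eq:lemma_ex_beta} with the shifted initial time, one gets a bound of the form $\|y_j(k_0+k_1)\|\le\max\{\tilde\beta_j(B,\,k_1-m),\ \tilde\gamma^y_j(\sup_{[k_0+m',\,k_0+k_1]}\|y_j\|),\ U\}$ where $B$ bounds $\|x\|$ uniformly over the relevant interval (using \eqref{eq:theorem-uios-x}-type state bounds, or simply the supremum bound on the trajectory, to control $\|x(k_0+m)\|$ in terms of $\|x(k_0)\|$ and $U$). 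Because $\tilde\gamma^y_j(s)<s$ strictly, repeated substitution contracts the self-referential term: after $n$ iterations the $\tilde\gamma^y_j$-term becomes $(\tilde\gamma^y_j)^{\circ n}(\cdot)$, which one shows tends to $0$ (this is the standard small-gain fact that iterates of a $\mathcal{K}$ function lying strictly below the identity converge to $0$ pointwise, cf. \cite[Chapter~8.1]{isidori2017lectures} or \cite{jiang1994small}).

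Next I would organize the iteration so that it terminates usefully in finite time relative to $k_1$. The window $[k_0+\lfloor k_1/M\rfloor,\,k_0+k_1]$ has length $\le k_1(1-1/M)$, so iterating the argument along the nested windows obtained by repeatedly taking the ``$\lfloor\cdot/M\rfloor$'' tail, after roughly $\log_{M/(M-1)}k_1$ steps the innermost window has bounded length; on such a short window $\|y_j\|$ is controlled by $\tilde\beta_j(\|x\|,0)$ (a $\mathcal{K}$ function of the state at the window's start) together with $U$. Combining: $\|y_j(k_0+k_1)\|\le\max\{(\tilde\gamma^y_j)^{\circ n(k_1)}(C\,(\text{data})),\ \text{(something decaying in }k_1\text{ via }\tilde\beta_j\text{)},\ U\}$ where $n(k_1)\to\infty$ as $k_1\to\infty$. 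The decaying envelope is then repackaged as a single $\hat\beta_j\in\mathcal{KL}$: one takes $\hat\beta_j(r,k_1)$ to dominate, for each fixed $r$, both the $\tilde\beta_j$ contributions (already $\mathcal{KL}$) and the term $(\tilde\gamma^y_j)^{\circ n(k_1)}(g(r))$ for an appropriate $\mathcal{K}$ bound $g$ on the data in terms of $\|x(k_0)\|$ — the latter is $\mathcal{K}$ in $r$ and $\to 0$ in $k_1$, hence can be majorized by a $\mathcal{KL}$ function (invoking a standard lemma, e.g. \cite{sontag2002small} or \cite{lin1996smooth}, that a function that is $\mathcal{K}$ in one variable and decreasing to $0$ in the other is bounded by a $\mathcal{KL}$ function). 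The $U$-term is exactly the claimed $\tilde\gamma^u_j(\|u_{[k_0,k_0+k_1]}\|)$.

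A delicate point is keeping the argument \emph{uniform in $k_0$}: since \eqref{eq:lemma_ex_beta} already has this uniformity (the bound depends on $k_1$, not $k_0$), and the state bound $\|x(k_0+m)\|$ is controlled using only $\|x(k_0)\|$, $U$, and $m$ (again uniformly), the resulting $\hat\beta_j$ does not depend on $k_0$; I would state this explicitly at each substitution. The main obstacle I anticipate is the bookkeeping in the iteration: tracking how the window endpoints, the accumulated $\tilde\beta_j$ terms, and the composition depth $n(k_1)$ all depend on $k_1$, and then assembling these heterogeneous decaying pieces into one clean $\mathcal{KL}$ function without circularity (one must ensure the ``data'' fed into $(\tilde\gamma^y_j)^{\circ n}$ is bounded \emph{a priori} by the supremum hypothesis, not by the quantity being estimated). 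Handling the floor function $\lfloor k_1/M\rfloor$ carefully — so that small $k_1$ (where $\lfloor k_1/M\rfloor=0$ and the window is all of $[k_0,k_0+k_1]$, giving no contraction) is dealt with separately via the finite-window/compactness bound — is the other place where care is needed.
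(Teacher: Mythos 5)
Your plan hinges on re-applying \eqref{eq:lemma_ex_beta} with a shifted initial time $k_0+m$, which produces a term $\tilde\beta_j(B, k_1-m)$ with $B$ a bound on $\|x(k_0+m)\|$. That step is not available: the hypotheses of Lemma~\ref{lemma:ex_beta} are only \eqref{eq:lemma_ex_beta} and boundedness of the \emph{output} $y_j$. There is no state bound of the form \eqref{eq:theorem-uios-x}, nor a ``supremum bound on the trajectory'' $x$, anywhere in the lemma's assumptions, so you cannot control $\|x(k_0+m)\|$ in terms of $\|x(k_0)\|$ and $U$ within this lemma. (In the way the lemma is actually used inside Theorem~\ref{theorem:uios}, the state bounds have already been absorbed into $\tilde\beta_j$ before the lemma is invoked; the lemma itself is purely an output-side statement.) This is a genuine gap, not a bookkeeping issue.

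The paper's proof avoids re-initialization entirely. It \emph{fixes} $k_0$ once and for all, first peels off the input term by defining the thresholded sequence $z_j(k_1)=\|y_j(k_0+k_1)\|$ when this exceeds $\tilde\gamma^u_j(\|u_{[k_0,k_0+k_1]}\|)$ and $z_j(k_1)=0$ otherwise, so that $z_j(k_1)\le\max\{\tilde\beta_j(\|x(k_0)\|,k_1),\,\tilde\gamma^y_j(|z_{j_{[\lfloor k_1/M\rfloor,k_1]}}|)\}$. The mechanism that replaces your ``restart'' is already built into this inequality: as $k_1$ grows, the left endpoint $\lfloor k_1/M\rfloor$ of the self-gain window tends to infinity, and $\tilde\beta_j(\cdot,k_1)$ decays, so a single $k_0$-anchored induction (Claim (ii)) suffices, using an increasing sequence $\hat\tau_i=\max\{\tau_i,M\hat\tau_{i-1}\}$ so that $[t_1/M,t_1]$ stays inside the previously controlled region. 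The piecewise-constant continuous extension and Lemma~\ref{lemma:sontag} are then used to manufacture the $\mathcal{KL}$ function $\hat\beta_j$ in a uniform way. You would need to reorganize your argument along these lines — keeping $k_0$ fixed and iterating the self-gain inequality in place, rather than re-applying the hypothesis from a fresh initial state — and handle the $u$-term by thresholding up front rather than carrying it through the iteration.
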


The main idea in the proof of Lemma~\ref{lemma:ex_beta} is to apply a continuous extension argument and \cite[Lemma~A.2]{sontag2002small} stated below in Lemma~\ref{lemma:sontag}.
\begin{lemma} \label{lemma:sontag} 
Given $\delta \in \mathcal{K}_{\infty}$ and $T:(0, \infty) \times (0, \infty) \rightarrow \mathbb{R}_{\geq 0}$ such that, (i) for all $\epsilon > 0$, $s_1 < s_2$ implies $T(\epsilon, s_1) \leq T(\epsilon, s_2)$; (ii) for all $s > 0$, $\lim_{\epsilon \rightarrow 0^+} T(\epsilon, s) = \infty$. Then there exists $\hat{\beta} \in \mathcal{KL}$ such that, for each $s >0$ and $t_1 \in \mathbb{R}_{\geq 0}$, there exists some $\epsilon \in A_{s, t_1} \coloneqq \{\epsilon' \in (0, \infty) \mid t_1 \geq T(\epsilon', s)\} \cup \{\infty\}$ such that $\min\{\epsilon, \delta^{-1}(s)\} \leq \hat{\beta}(s, t_1)$.
\end{lemma}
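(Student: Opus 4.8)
The plan is to extract from the data $(\delta,T)$ the ``best achievable'' quantity and then majorize it by a genuine $\mathcal{KL}$ function. For $s>0$ and $t_1\ge 0$ define
\[
\theta(s,t_1)\coloneqq\inf\{\min\{\epsilon,\delta^{-1}(s)\}\,:\,\epsilon\in A_{s,t_1}\}.
\]
Since $\infty\in A_{s,t_1}$ and $\min\{\infty,\delta^{-1}(s)\}=\delta^{-1}(s)$, we have $0\le\theta(s,t_1)\le\delta^{-1}(s)<\infty$, so $\theta$ is well defined and finite. If I can produce $\hat\beta\in\mathcal{KL}$ with $\hat\beta(s,t_1)>\theta(s,t_1)$ for all $s>0,\,t_1\ge 0$, the lemma follows immediately: $\hat\beta(s,t_1)$ strictly exceeds the infimum defining $\theta(s,t_1)$, so some $\epsilon\in A_{s,t_1}$ satisfies $\min\{\epsilon,\delta^{-1}(s)\}<\hat\beta(s,t_1)$, hence $\le\hat\beta(s,t_1)$.

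First I would record the qualitative behaviour of $\theta$, read directly off (i)--(ii). (a) For $t_1\le t_1'$, $t_1\ge T(\epsilon,s)$ implies $t_1'\ge T(\epsilon,s)$, so $A_{s,t_1}\subseteq A_{s,t_1'}$ and $\theta(s,\cdot)$ is non-increasing in $t_1$. (b) For fixed $s>0$ and any $\epsilon_0>0$ the number $T(\epsilon_0,s)$ is finite, so $\epsilon_0\in A_{s,t_1}$ once $t_1\ge T(\epsilon_0,s)$, giving $\limsup_{t_1\to\infty}\theta(s,t_1)\le\epsilon_0$; letting $\epsilon_0\downarrow 0$ yields $\lim_{t_1\to\infty}\theta(s,t_1)=0$. (c) For $s<s'$, (i) gives $T(\epsilon,s)\le T(\epsilon,s')$, hence $A_{s',t_1}\subseteq A_{s,t_1}$; together with $\min\{\epsilon,\delta^{-1}(s)\}\le\min\{\epsilon,\delta^{-1}(s')\}$ this makes $\theta(\cdot,t_1)$ non-decreasing in $s$. (d) From $\theta(s,t_1)\le\delta^{-1}(s)$ and $\delta^{-1}\in\mathcal{K}_{\infty}$, $\theta(s,t_1)\to 0$ as $s\to 0^+$.

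The hard part is turning $\theta$ into a genuine $\mathcal{KL}$ function: $\theta(\cdot,t_1)$ has the correct monotonicity but need not be continuous or strictly increasing in $s$, so I cannot merely add a small perturbation. The remedy is a logarithmic running average. Set $\hat\beta(0,t_1)\coloneqq 0$ and, for $s>0$,
\[
\hat\beta(s,t_1)\coloneqq\frac{1}{\ln 2}\int_s^{2s}\theta(\sigma,t_1)\,\frac{d\sigma}{\sigma}+\frac{s}{1+s}\,e^{-t_1}.
\]
Being monotone in $\sigma$ and bounded on $[s,2s]$ by $\delta^{-1}(2s)$, $\theta(\cdot,t_1)$ is Borel and the integral is finite; rewriting it as $\int_{\ln s}^{\ln s+\ln 2}\theta(e^{u},t_1)\,du$ shows it is continuous in $s$, and the substitution $\sigma\mapsto(s'/s)\sigma$ together with (c) shows it is non-decreasing in $s$. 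Adding the bounded, strictly increasing term $\frac{s}{1+s}e^{-t_1}$ makes $s\mapsto\hat\beta(s,t_1)$ continuous, strictly increasing on $[0,\infty)$ and zero at $s=0$, i.e. of class $\mathcal{K}$; here I also use $\hat\beta(s,t_1)\le\delta^{-1}(2s)+\frac{s}{1+s}\to 0$ as $s\to 0^+$. In the second variable both summands are non-increasing (by (a) and the exponential), and $\hat\beta(s,t_1)\to 0$ as $t_1\to\infty$ by dominated convergence, using (b) and the dominating function $\sigma\mapsto\delta^{-1}(2s)/\sigma$ on $[s,2s]$; at $s=0$ this is trivial. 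Hence $\hat\beta\in\mathcal{KL}$.

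Finally, since $\theta(\cdot,t_1)$ is non-decreasing, $\int_s^{2s}\theta(\sigma,t_1)\frac{d\sigma}{\sigma}\ge\theta(s,t_1)\int_s^{2s}\frac{d\sigma}{\sigma}=\theta(s,t_1)\ln 2$, so $\hat\beta(s,t_1)\ge\theta(s,t_1)+\frac{s}{1+s}e^{-t_1}>\theta(s,t_1)$ for every $s>0,\,t_1\ge 0$, and the selection argument from the first paragraph completes the proof. The two points needing care in a full write-up are the continuity/monotonicity bookkeeping for the log-average and the strict domination $\hat\beta>\theta$, which is precisely what lets one extract an actual $\epsilon\in A_{s,t_1}$ rather than merely an infimizing sequence.
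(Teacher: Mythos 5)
Your proof is correct. Note that the paper itself gives no proof of this lemma: it is imported verbatim as \cite[Lemma~A.2]{sontag2002small}, so your argument is necessarily a self-contained reconstruction rather than a replica of anything in the text. The structure you use is sound: the infimum function $\theta(s,t_1)=\inf_{\epsilon\in A_{s,t_1}}\min\{\epsilon,\delta^{-1}(s)\}$ is well defined because $\infty\in A_{s,t_1}$; properties (a)--(d) follow correctly from the monotonicity of $T(\epsilon,\cdot)$, the finiteness of $T(\epsilon_0,s)$, and the bound $\theta\le\delta^{-1}$; the logarithmic running average inherits monotonicity in $s$ (via the scaling substitution and (c)) and non-increase plus decay in $t_1$ (via (a), (b) and monotone/dominated convergence), while restoring continuity; and the additive term $\tfrac{s}{1+s}e^{-t_1}$ simultaneously forces strict increase in $s$ and the strict domination $\hat\beta>\theta$, which is exactly what lets you pull an actual $\epsilon\in A_{s,t_1}$ out of the infimum rather than only an infimizing sequence. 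The only point worth making explicit in a full write-up is continuity of $s\mapsto\hat\beta(s,t_1)$ at $s=0$ (you do note the bound $\delta^{-1}(2s)+\tfrac{s}{1+s}$, which settles it) and the measurability of $\theta(\cdot,t_1)$, which you correctly get for free from monotonicity. Compared with the construction in Ingalls--Sontag, which builds the majorizing $\mathcal{KL}$ function through a more combinatorial two-parameter regularization, your averaging device is arguably cleaner and makes the class-$\mathcal{K}$ verification in the first argument essentially mechanical.
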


\begin{proof}[Proof of Lemma~\ref{lemma:ex_beta}]
Fix $k_0 \in \mathbb{Z}$. For any $k_1 \in \mathbb{Z}_{\geq 0}$, define $z_j(k_1) = \| y_j(k_0 + k_1) \| $ if $\| y_j(k_0 + k_1) \| > \tilde{\gamma}^u_j(\| u_{[k_0, k_0 + k_1]}\|)$ and $z_j(k_1) = 0$ otherwise. From the assumption \eqref{eq:lemma_ex_beta} in Lemma~\ref{lemma:ex_beta}, we have that for some $M \in \mathbb{Z}, M \geq 2$,
\begin{equation} \label{eq:lemma-proof1}
z_j(k_1) \leq \max \{ \tilde{\beta}_j(\|x(k_0)\|, k_1), \tilde{\gamma}^y_j(|z_{j_{[ \lfloor k_1/M \rfloor, k_1]}}|) \}.
\end{equation}
Note the implicit dependence $z_j(k_1) = z_j(k_1, x(k_0), u)$.
We sample and hold the left points to extend $z_j$ to a piecewise continuous function $w_j$. For any $t_1 \in \mathbb{R}_{\geq 0}$, define $w_j(t_1) = \sum_{k_1 = 0}^{\infty} \mathds{1}_{[k_1, k_1+1)}(t_1) z_j(k_1)$, where $\mathds{1}_{[k_1, k_1+1)}(t_1) = 1$ if $t_1 \in [k_1, k_1 + 1)$ and zero otherwise. For any $\tau, \tau' \in \mathbb{R}$, let $|w_{j_{[\tau, \tau']}}| \coloneqq \sup_{\tau \leq \overline{\tau} \leq \tau'} |w_j(\overline{\tau})|$. Since $\lfloor t_1 \rfloor \leq t_1 < \lfloor t_1 \rfloor + 1$, we have $w_j(t_1) = z_j(\lfloor t_1 \rfloor) = w_j(\lfloor t_1 \rfloor)$ and
\begin{equation} \label{eq:lemma-proof1-1}
\begin{split}
|w_{j_{[t_1 / M, t_1]}}| = |w_{j_{[\lfloor t_1/M \rfloor, \lfloor t_1 \rfloor]}}| & = |w_{j_{[\lfloor \lfloor t_1 \rfloor /M \rfloor, \lfloor t_1 \rfloor]}}| \\
& = |z_{j_{[\lfloor \lfloor t_1 \rfloor /M \rfloor, \lfloor t_1 \rfloor]}}|. 
\end{split}
\end{equation}
For $s \in \mathbb{R}_{\geq 0}$, let $\beta_j(s, t_1)= \sum_{k_1=0}^{\infty} \mathds{1}_{[k_1, k_1+1)}(t_1) \tilde{\beta}_j(s, k_1)$. Then $\beta_j \in \mathcal{KL}$ and $\beta_j(s, t_1) = \tilde{\beta}_j(s, \lfloor t_1 \rfloor)$. From \eqref{eq:lemma-proof1} and \eqref{eq:lemma-proof1-1}, we have that for all $t_1 \in \mathbb{R}_{\geq 0}$,
\begin{equation} \label{eq:lemma-proof2}
w_j(t_1) \leq \max\{ \beta_j(\| x(k_0)\|, t_1), \tilde{\gamma}^y_j(|w_{j_{[t_1/ M, t_1]}}|)\}.
\end{equation}
To apply Lemma~\ref{lemma:sontag}, we first show the following claims.

\textit{Claim (i): There exists $\delta \in \mathcal{K}_\infty$ such that for all $t_1 \in \mathbb{R}_{\geq 0}$ and $x(k_0) \in \mathbb{R}^{n_{x_1} + n_{x_2}}$, we have $w_j(t_1) \leq \delta^{-1}(\| x(k_0)\|)$.} 

\begin{proof}[Proof of Claim (i)]
Note that $|w_{j_{[0, \infty)}}|  \coloneqq \sup_{t_1  \in \mathbb{R}_{\geq 0}} w_j(t_1)$ $= \sup_{k_1 \in \mathbb{Z}_{\geq 0}} \| y_j(k_0 + k_1) \| < \infty$. From \eqref{eq:lemma-proof2}, we have
\vspace*{-0.4em}
\begin{equation*}
\vspace*{-0.4em}
|w_{j_{[0, \infty)}}| \leq \max\{\beta_j(\|x(k_0)\|, 0), \tilde{\gamma}^y_j(|w_{j_{[0, \infty)}}|)\}.
\end{equation*}
Since $\tilde{\gamma}^y_j(s) < s$ for all $s > 0$, it follows that for any $t_1 \in \mathbb{R}_{\geq 0}$,
\begin{equation} \label{eq:claim1}
w_j(t_1) \leq |w_{j_{[0, \infty)}}| \leq \beta_j(\| x(k_0) \|, 0).
\end{equation}
Choose $\delta \in \mathcal{K}_\infty$ such that $\delta^{-1}(\|x(k_0)\|) \geq \beta_j(\|x(k_0)\|, 0)$ (e.g., $\delta^{-1} = id + \beta_j(\cdot, 0) $) gives the desired result.
\end{proof}

\textit{Claim (ii): For any $\epsilon, r > 0$, there exists $\hat{T}_{\epsilon, r} \in \mathbb{R}_{\geq 0}$ such that for all $t_1 \geq \hat{T}_{\epsilon, r}$, $w_j(t_1) \leq \epsilon$ whenever $\|x(k_0)\| \leq r$.}

\begin{proof}[Proof of Claim (ii)]
The proof uses \eqref{eq:claim1} and proceeds as in \cite[Lemma~2.1]{chen2005simplified}. Let $\epsilon, r > 0$, if $\beta_j(\| x(k_0) \|, 0) \leq \beta_j(r, 0) \leq \epsilon$, then by \eqref{eq:claim1}, $w_j(t_1) \leq  \beta_j(\|x(k_0)\|, 0) \leq \epsilon$ for all $t_1 \in \mathbb{R}_{\geq 0}$. Otherwise, since $\tilde{\gamma}^y_j$ is strictly contractive, there exists $n_{\epsilon, r} \in \mathbb{Z}_{\geq 0}$ such that the $n_{\epsilon, r}$-times composition $(\tilde{\gamma}^y_j)^{(n_{\epsilon, r})}(\beta_j(r, 0)) \leq \epsilon$. For $i=1, \ldots, n_{\epsilon, r}$, let $\tau_i \in \mathbb{R}_{\geq 0}$ be the first time instance such that $\beta_j(r, \tau_i)\leq  (\tilde{\gamma}^y_j)^{(i)}(\beta_j(r, 0))$ so that $\tau_i \leq \tau_j$ for $j=i+1, \ldots, n_{\epsilon,r}$. Define $\hat{\tau}_0 = 0$ and $\hat{\tau}_i = \max\{\tau_i, M\hat{\tau}_{i-1} \}$. We will show by induction that for $t_1 \geq \hat{\tau}_i$, $w_j(t_1) \leq (\tilde{\gamma}^y_j)^{(i)}(\beta_j(r, 0))$. 

Claim (i) establishes the case for $i=0$ (with $(\tilde{\gamma}^{y}_j)^{(0)} = id$). Suppose the induction hypothesis holds for $t_1 \geq \hat{\tau}_i$. For $t_1 \geq \hat{\tau}_{i+1}$, we have $t_1 \geq \tau_{i+1}$ and $t_1 / M \geq \hat{\tau}_{i}$. From \eqref{eq:lemma-proof2},
\begin{equation*}
\begin{split}
w_j(t_1) & \leq \max\{ \beta_j(\|x(k_0)\|, \tau_{i+1}), \tilde{\gamma}^y_j \circ (\tilde{\gamma}^y_j)^{(i)}(\beta_j(r, 0)) \} \\
& = (\tilde{\gamma}^y_j)^{(i+1)}(\beta_j(r, 0)).
\end{split}
\end{equation*}
Claim~(ii) follows from choosing $\hat{T}_{\epsilon, r} \geq \hat{\tau}_{n_{\epsilon ,r}}$.
\end{proof}

Let $\hat{T}_{\epsilon, r}$ be given by Claim (ii). As in \cite[Proposition~2.7]{sontag2002small}, define $T(\epsilon, r) = r/\epsilon + \inf\{\hat{T}_{\epsilon', r'} | r \leq r', \epsilon' \in (0, \epsilon] \}$. Then $T(\cdot, \cdot)$ satisfies the conditions in Lemma~\ref{lemma:sontag}. Fix $s = \| x(k_0) \| > 0$ (the case for $s=0$ is immediate), any $t_1 \in \mathbb{R}_{\geq 0}$ and the set $A_{s, t_1}$. By Claim (ii), $w_j(t_1) \leq \epsilon$ for all $\epsilon \in A_{s, t_1}$. Let $\hat{\beta}_j \in \mathcal{KL}$ and $\epsilon \in A_{s, t_1}$ be given by Lemma~\ref{lemma:sontag}, such that $\min\{ \epsilon , \delta^{-1}(\|x(k_0)\|) \} \leq \hat{\beta}_j(\|x(k_0)\|, t_1)$. Then
\begin{itemize}
\item If $\delta^{-1}(\|x(k_0)\|) \leq \epsilon$, then by Claim (i) we have $w_j(t_1) \leq \delta^{-1}(\|x(k_0)\|) \leq \hat{\beta}_j(\|x(k_0)\|, t_1)$. 
\item If $\epsilon < \delta^{-1}(\|x(k_0)\|)$, then $\epsilon < \infty$ and $w_j(t_1) \leq \epsilon \leq \hat{\beta}_j(\|x(k_0)\|, t_1)$.
\end{itemize}
Therefore, for all $t_1 \in \mathbb{R}_{\geq 0}$, $w_j(t_1) \leq \hat{\beta}_j(\|x(k_0)\|, t_1)$. In particular, for all $k_1 \in \mathbb{Z}_{\geq 0}$, $w_j(k_1) = z_j(k_1)  \leq \hat{\beta}_j(\|x(k_0)\|, k_1)$. By definition of $z_j(k_1)$, we have the desired result.
\end{proof}

We now prove Theorem~\ref{theorem:uios}. The proof adapts \cite[Theorem~2.1]{chen2005simplified} to discrete-time systems of the form \eqref{eq:cl-sys}. We first show that $\sup_{k \geq k_0} \| x(k) \| < \infty$ and $\sup_{k \geq k_0} \| y(k) \| < \infty$, then we apply Lemma~\ref{lemma:ex_beta} with $M=4$ to show that system~\eqref{eq:cl-sys} is UIOS.

\begin{proof}[Proof of Theorem~\ref{theorem:uios}]

From \eqref{eq:theorem-uios-y}, we have that for all $k \geq k_0$ and $j= 1, 2$,
\begin{equation*}
\begin{split}
\| y_{j_{[k_0, k]}} \| \leq \max \{ & \beta_j(\| x_j(k_0) \|, 0), \\
& \gamma^y_j(\|v_{j_{[k_0, k]}} \|),  \gamma^u_j(\|u_{j_{[k_0, k]}} \|)\}. 
\end{split}
\end{equation*}
Substituting $v_1 = y_2$,  $v_2 = y_1$ and the bound for $\| y_{2_{[k_0, k]}} \|$ into that of $\| y_{1_{[k_0, k]}} \|$, we have
\begin{equation} \label{eq:uios-bound-1}
\begin{split}
& \| y_{1_{[k_0, k]}} \| \\
& \leq \max\{  \beta_1(\| x_1(k_0) \|, 0), \gamma_1^y \circ \beta_2(\|x_2(k_0)\|, 0), \\ 
& \gamma^y_1\circ\gamma^y_2(\| y_{1_{[k_0, k]}}\|), \gamma_1^y \circ \gamma_2^u(\| u_{2_{[k_0, k]}} \|), \gamma_1^u(\|u_{1_{[k_0, k]}}\|) \} \\
& \leq \max\{  \beta_1(\| x_1(k_0) \|, 0), \gamma_1^y \circ \beta_2(\|x_2(k_0)\|, 0), \\ 
& \hspace*{4em} \gamma_1^y \circ \gamma_2^u(\| u_{2_{[k_0, k]}} \|), \gamma_1^u(\|u_{1_{[k_0, k]}}\|) \},
\end{split}
\end{equation}
where the last inequality follows from $\gamma_1^y \circ \gamma_2^y(\| y_{1_{[k_0, k]}} \|) < \| y_{1_{[k_0, k]}} \|$. A symmetric argument shows that 
\begin{equation} \label{eq:uios-bound-2}
\begin{split}
& \hspace*{-0.8em} \| y_{2_{[k_0, k]}} \| \leq \max\{  \beta_2(\| x_2(k_0) \|, 0) , \gamma_2^y \circ \beta_1(\|x_1(k_0)\|, 0),\\ 
& \hspace*{6.5em} \gamma_2^y \circ \gamma_1^u(\| u_{1_{[k_0, k]}} \|), \gamma_2^u(\|u_{2_{[k_0, k]}}\|) \}.
\end{split}
\end{equation}
Recall that $u_j \in l^{\infty}_{n_{u_j}}$ for $j=1,2$. From \eqref{eq:uios-bound-1} and \eqref{eq:uios-bound-2}, $\sup_{k \geq k_0} \| y_j(k) \| < \infty$. Substituting $v_{1} = y_{2}$ and $v_{2} =  y_{1}$ in \eqref{eq:theorem-uios-x}, it follows that $\sup_{k \geq k_0} \| x_j(k) \| < \infty$. It remains to show that system~\eqref{eq:cl-sys} is UIOS.

Upper bound $\|v_{j_{[k_0, k-1]}}\|$ in \eqref{eq:theorem-uios-x} by \eqref{eq:uios-bound-1} and \eqref{eq:uios-bound-2},  using $\| x_j(k_0) \| \leq \|x(k_0)\|$ and $\| u_{j_{[k_0, k]}}\| \leq  \| u_{[k_0, k]}\|$, we have
\begin{equation} \label{eq:x-bound}
\begin{split}
& \hspace*{-1em}  \| x(k) \| \leq 2 \max \{\| x_1(k) \| , \| x_2(k) \| \}  \\
& \hspace*{-1em} \leq 2 \max_{j=1,2} \left\{ \sigma_j ( \left\| x(k_0) \right\| ), \sigma_j^y ( \| v_{j_{[k_0, k]}} \| ), \sigma^u_j ( \left\| u_{[k_0, k]} \right\| ) \right\} \\
& \hspace*{-1em}  \leq \max \left\{ \overline{\sigma}( \left\| x(k_0) \right\| ), \overline{\gamma}( \left\| u_{[k_0, k]} \right\| )  \right\},
\end{split}
\end{equation}
where $\overline{\sigma}(s) = 2 \max \{ \sigma_1(s), \sigma_2(s),  \sigma_1^y (\beta_2(s, 0)),\sigma_2^y (\beta_1(s, 0)),$ $\sigma_1^y \circ \gamma_2^y (\beta_1(s, 0)), \sigma_2^y \circ \gamma_1^y (\beta_2(s, 0)) \}$ and $\overline{\gamma}(s) = 2 \max \{ \sigma^u_1(s),$  $\sigma^u_2(s), \sigma_1^y \circ \gamma_2^y \circ \gamma_1^u(s), \sigma_2^y \circ \gamma_1^y \circ \gamma_2^u(s), \sigma_1^y \circ\gamma^u_2(s), \sigma_2^y \circ \gamma^u_1(s) \}$.

Consider subsystem $j=1$ and \eqref{eq:theorem-uios-y}. For any $k_1 \in \mathbb{Z}_{\geq 0}$ and $k_0 \in \mathbb{Z}$, let $k_0 + \lceil k_1 / 2 \rceil$ be the initial time and $k = k_0 + k_1$. Then $k - (k_0 +  \lceil k_1 / 2 \rceil) = \lfloor k_1 / 2 \rfloor$ and
\begin{equation}
\label{eq:uios-proof1}
\begin{split}
& \| y_{1}(k_0+  k_1) \|  \\
&  \leq \max \{\beta_1(\|x_1(k_0 + \lceil k_1 / 2 \rceil) \|, \lfloor k_1 / 2 \rfloor), \\ 
&  \gamma^y_1(\| y_{2_{[k_0 + \lceil k_1/2 \rceil, k_0 + k_1]}} \|), \gamma^u_1(\|u_{1_{[k_0 + \lceil k_1 / 2 \rceil, k_0 + k_1]}} \|) \} \\
& \leq \max \{\beta_1(\|x(k_0 + \lceil k_1 / 2 \rceil) \|, \lfloor k_1 / 2 \rfloor), \\ 
&  \gamma^y_1(\| y_{2_{[k_0 + \lceil k_1/2 \rceil, k_0 + k_1]}} \|), \gamma^u_1(\|u_{{[k_0, k_0 + k_1]}} \|) \}.
\end{split}
\end{equation}
Consider subsystem $j=2$ and \eqref{eq:theorem-uios-y}. Let $k_0 + \lfloor k_1/ 4 \rfloor$ be the initial time. For any $\lceil k_1/2 \rceil \leq \overline{k}_1 \leq k_1$, let $k = k_0 + \overline{k}_1$. Then $k- (k_0 +  \lfloor k_1 / 4 \rfloor) = \overline{k}_1 -  \lfloor k_1 / 4 \rfloor \geq \lceil k_1/2 \rceil - \lfloor k_1/4 \rfloor \geq \lceil k_1/4 \rceil$, 
\begin{equation}\label{eq:uios-proof2}
\begin{split}
& \left\| y_{2} (k_0 + \overline{k}_1) \right\| \\
& \leq \max \{ \beta_2(\|x_2(k_0 + \lfloor k_1 / 4 \rfloor ) \|,  \overline{k}_1 -   \lfloor k_1/4 \rfloor), \\
& \gamma_2^y(\| y_{1_{[k_0 + \lfloor k_1 / 4 \rfloor, k_0 + \overline{k}_1]}}\| ), \gamma_2^u(\| u_{2_{[k_0+ \lfloor k_1 / 4 \rfloor, k_0 + \overline{k}_1]}} \| ) \} \\
& \leq \max \{ \beta_2(\|x(k_0 + \lfloor k_1 / 4 \rfloor ) \|, \lceil k_1/4 \rceil), \\
&   \gamma_2^y(\| y_{1_{[k_0 + \lfloor k_1 / 4 \rfloor, k_0 + k_1]}}\| ), \gamma_2^u(\| u_{{[k_0, k_0 + k_1]}} \| ) \}
\end{split}
\end{equation}
Note that the right-hand side of the last inequality in \eqref{eq:uios-proof2} does not depend on $\overline{k}_1$. Now taking $\sup_{\lceil k_1/2 \rceil \leq \overline{k}_1 \leq k_1}$ on both sides of \eqref{eq:uios-proof2} shows that 
\begin{equation} \label{eq:uios-proof3}
\begin{split}
& \| y_{2_{[k_0 + \lceil k_1/2 \rceil, k_0 + k_1]}} \| \\
& \leq \max \{ \beta_2(\|x(k_0 + \lfloor k_1 / 4 \rfloor ) \|, \lceil k_1/4 \rceil), \\ 
& \hspace*{4em} \gamma_2^y(\| y_{1_{[k_0 + \lfloor k_1 / 4 \rfloor, k_0 + k_1]}}\| ), \gamma_2^u(\| u_{{[k_0, k_0 + k_1]}} \| ) \}.
\end{split}
\end{equation}
Upper bounding $\| y_{2_{[k_0 + \lceil k_1/2 \rceil, k_0 + k_1]}} \|$ in \eqref{eq:uios-proof1} using \eqref{eq:uios-proof3} and upper bounding $\| x(k_0 + \lceil k_1/2 \rceil) \|$ in \eqref{eq:uios-proof1} using \eqref{eq:x-bound}, we have
\begin{equation} \label{eq:uios-proof4} 
\begin{split}
\| y_1(k_0 + k_1) \| \leq \max \{ \beta_1( \overline{\sigma}(\| x(k_0) \|), \lfloor k_1/2 \rfloor),& \\
\gamma_1^y \circ \beta_2(\overline{\sigma}(\| x(k_0) \| ), \lceil k_1/4 \rceil), &\\
\gamma_1^y \circ \gamma_2^y(\| y_{1_{[k_0 + \lfloor k_1 / 4 \rfloor, k_0 + k_1]}} \|),  \gamma_1^y \circ \gamma_2^u(\|u_{{[k_0, k_0 + k_1]}} \|), & \\
\gamma_1^u(\| u_{{[k_0, k_0 + k_1]}} \|), \beta_1 ( \overline{\gamma}(\| u_{[k_0, k_0+k_1]} \|), 0), & \\
\gamma_1^y \circ \beta_2(\overline{\gamma}(\| u_{[k_0, k_0+k_1]} \|), 0)  \}& .
\end{split}
\end{equation}
Let $\tilde{\gamma}^y_1(s) = \gamma_1^y \circ \gamma_2^y(s)$, $\tilde{\beta}_1(s, k_1) = \max\{ \beta_1(\overline{\sigma}(s), \lfloor k_1/ 4 \rfloor),$ $\gamma^y_1 \circ \beta_2(\overline{\sigma}(s),  \lfloor k_1/ 4 \rfloor) \}$ and $\tilde{\gamma}^u_1(s) = \max \{ \gamma_1^y \circ \gamma_2^u(s), \gamma_1^u(s),$ $\beta_1(\overline{\gamma}(s), 0), \gamma_1^y \circ \beta_2(\overline{\gamma}(s), 0)\}$. By a symmetric argument, we can define $\tilde{\beta}_2$ and $\tilde{\gamma}^u_2$ analogously. Here, $\tilde{\beta}_j \in \mathcal{KL}$ and $\tilde{\gamma}^u_j \in \mathcal{K}$ for $j=1, 2$. Re-writing \eqref{eq:uios-proof4} in terms of $\tilde{\beta}_j$ and $\tilde{\gamma}_j$, we have
\begin{equation*}
\begin{split}
\| y_j (k_0 + k_1) \| &\leq \max \{ \tilde{\beta}_j(\|x(k_0)\|, k_1), \\
& \hspace*{1.5em} \tilde{\gamma}^y_j(\| y_{j_{[k_0 + \lfloor k_1/4 \rfloor, k_0 + k_1]}} \|), \tilde{\gamma}^u_j(\| u_{[k_0, k_0 + k_1]}\|)\},
\end{split}
\end{equation*}
where $\tilde{\gamma}^y_j(s) < s$ for all $s > 0$ by strict contractivity of $\gamma_1^y \circ \gamma^y_2(\cdot)$ and $\gamma_2^y \circ \gamma^y_1(\cdot)$, and we have shown $ \sup_{k_1 \in \mathbb{Z}_{\geq 0}} \| y_j (k_0 + k_1) \| < \infty$.
Invoking Lemma~\ref{lemma:ex_beta} with $M=4$, there exists $\hat{\beta}_j \in \mathcal{KL}$ such that for all $k_0 \in \mathbb{Z}$ and $k_1 \in \mathbb{Z}_{\geq 0}$,
\begin{equation*}
\| y_j(k_0 + k_1) \| \leq \max \{ \hat{\beta}_j(\| x(k_0) \|, k_1), \tilde{\gamma}^u_j(\| u_{[k_0, k_0 + k_1]}\|)\}.
\end{equation*}
Let $k=k_0+k_1$, $\gamma(s) = 2 \max_{j=1, 2}\{ \tilde{\gamma}^u_j(s)\}$ and $\beta(s, k) = 2\max_{j=1, 2} \{ \tilde{\beta}_j(s, k)\}$. It follows that
\begin{equation*}
\begin{split}
\| y(k) \| 
& \leq 2 \max \{\| y_1(k) \|, \| y_2(k) \| \} \\
& \leq \max \{ \beta(\| x(k_0)\|, k-k_0), \gamma(\| u_{[k_0, k]}\|) \}
\end{split}
\end{equation*} 
for all $k, k_0 \in \mathbb{Z}$ with $k \geq k_0$ and any initial condition $x(k_0) \in \mathbb{R}^{n_{x_1} + n_{x_2}}$.
\end{proof}

\section{}
\label{app:obs_coro}
\begin{proof}[Proof of Corollary~\ref{corollary:obs}]
By \cite[Remark~5]{tran2018convergence}, for any inputs $v_2, u_2$, $x^*_2=0$ is the unique and bounded reference state solution, so that subsystem $j=2$ is UC. To show closed-loop UISC, let $\overline{x}_2(k)$ be any solution to inputs $\overline{v}_2, \overline{u}_2$ with initial condition $\overline{x}_2(k_0)$. Equation~\eqref{eq:coro-uioc-obs} implies that for any gains $\gamma^y_2, \gamma^u_2 \in \mathcal{K}$,
\begin{equation*}
\begin{split}
& \| x_2^*(k) - \overline{x}_2(k) \| \leq \beta_2(\| x_2^*(k_0) - \overline{x}_2(k_0) \|, k - k_0) \\
& \hspace*{1em} + \gamma^y_2(\| (v_2 - \overline{v}_2)_{[k_0, k-1]} \|) + \gamma^u_2(\| (u_2 - \overline{u}_2)_{[k_0, k-1]} \|) \}.
\end{split}
\end{equation*}
The above equation and \eqref{eq:coro-uioc-obs-x1} shows that \eqref{eq:coro-uisc-plus} in Corollary~\ref{corollary:uisc} hold. We now show that \eqref{eq:coro-uios-plus} in Corollary~\ref{corollary:uisc} is satisfied and hence the closed-loop UISC of system~\eqref{eq:app-cl}. 
Observe that $id + \gamma_1^y \in \mathcal{K}_{\infty}$ with $(id + \gamma_1^y)^{-1} \circ \gamma_1^y(s) < s$ for all $s > 0$. Let $\lambda_1, \lambda_2 \in \mathcal{K}_{\infty}$ be arbitrary. The result follows from choosing $\gamma_2^y(s) = ((id+\lambda_1) \circ (id+\gamma^y_1) \circ (id + \lambda_2))^{-1}(s)$.
\end{proof}

\section{}
\label{app:obs_lmi}
We show that if the linear matrix inequalities \eqref{eq:lmi} are satisfied then \eqref{eq:coro-uioc-obs} holds for the observer error dynamics. 

\begin{lemma} \label{lemma:obs_lmi}
Consider the observer error dynamics $\Delta z(k+1) = (A - P^{-1} Z C) \Delta z(k) + \rho G \sin(Hz(k)) - \rho G \sin(H \hat{z}(k))$. Suppose that there exists $Z \in \mathbb{R}^2$, $P > 0$ and $\epsilon > 0$ such that 
\begin{small}
\begin{equation}
\begin{split}
P - \epsilon I & < 0, \\
\begin{bmatrix}
-\theta P & A^\top P - C^\top Z^\top & \epsilon \rho (GH)^\top & A^\top P - C^\top Z^\top \\
PA - ZC & -P & 0 & 0 \\
\epsilon \rho GH & 0 & -\epsilon I & 0 \\
PA - ZC & 0 & 0 & P - \epsilon I
\end{bmatrix} & \leq 0.
\end{split}
\end{equation}
\end{small}
\noindent Then the observer error dynamics satisfies \eqref{eq:coro-uioc-obs} in Corollary~\ref{corollary:obs}.
\end{lemma}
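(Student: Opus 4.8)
The plan is to extract a discrete-time Lyapunov inequality from the two matrix inequalities and then convert exponential decay of the quadratic Lyapunov function $V(\xi)=\xi^{\top}P\xi$ into the $\mathcal{KL}$-estimate \eqref{eq:coro-uioc-obs}. First I would put the error dynamics into Lur'e form $\Delta z(k+1)=A_{c}\Delta z(k)+\psi(k)$ with $A_{c}=A-P^{-1}ZC$ and $\psi(k)=\rho G\sin(Hz(k))-\rho G\sin(H\hat{z}(k))$; the globally Lipschitz bound on $\rho G\sin(H\cdot)$ recorded in Section~\ref{subsec:obs} yields the quadratic constraint $\psi(k)^{\top}\psi(k)\le \rho^{2}\Delta z(k)^{\top}(GH)^{\top}(GH)\Delta z(k)$, and the $B_{w}w$ (and $B_{u}u$) terms cancel in the error dynamics, consistent with \eqref{eq:coro-uioc-obs} carrying no input term. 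A quick computation gives $A_{c}^{\top}P=A^{\top}P-C^{\top}Z^{\top}$ and $PA_{c}=PA-ZC$, which are exactly the off-diagonal blocks appearing in the LMI.

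Next I would take Schur complements in the second matrix inequality. Its three diagonal blocks $-P$, $-\epsilon I$ and $P-\epsilon I$ are all negative definite (the last one precisely because the first inequality says $\epsilon I-P>0$) and the off-diagonal blocks among them vanish, so eliminating all three is routine and shows that the second LMI is equivalent to $A_{c}^{\top}PA_{c}-\theta P+\epsilon\rho^{2}(GH)^{\top}(GH)-A_{c}^{\top}P(P-\epsilon I)^{-1}PA_{c}\le 0$; call this inequality $(\star)$.

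I would then establish the Lyapunov decrease. Expanding $V(\Delta z(k+1))=(A_{c}\Delta z(k)+\psi(k))^{\top}P(A_{c}\Delta z(k)+\psi(k))$ and bounding the cross term by Young's inequality $2x^{\top}y\le x^{\top}Mx+y^{\top}M^{-1}y$ with $x=PA_{c}\Delta z(k)$, $y=\psi(k)$ and the specific weight $M=(\epsilon I-P)^{-1}>0$ (well defined by the first LMI), the $\psi$-quadratic part collapses to $\epsilon\,\psi(k)^{\top}\psi(k)$ because $M^{-1}+P=\epsilon I$, while the $\Delta z$-quadratic part becomes $\Delta z(k)^{\top}\bigl(A_{c}^{\top}PA_{c}-A_{c}^{\top}P(P-\epsilon I)^{-1}PA_{c}\bigr)\Delta z(k)$. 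Invoking the quadratic constraint on $\psi(k)$ and then $(\star)$ gives $V(\Delta z(k+1))\le\theta\,V(\Delta z(k))$. Since $\theta\in(0,1)$, iterating yields $V(\Delta z(k))\le\theta^{\,k-k_{0}}V(\Delta z(k_{0}))$, and sandwiching $V$ between $\lambda_{\min}(P)\|\cdot\|^{2}$ and $\lambda_{\max}(P)\|\cdot\|^{2}$ produces $\|\Delta z(k)\|\le\sqrt{\lambda_{\max}(P)/\lambda_{\min}(P)}\;\theta^{(k-k_{0})/2}\|\Delta z(k_{0})\|$, which is \eqref{eq:coro-uioc-obs} with $\beta_{2}(s,t)=\sqrt{\lambda_{\max}(P)/\lambda_{\min}(P)}\,\theta^{t/2}s\in\mathcal{KL}$.

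The main obstacle is bookkeeping rather than analysis: correctly reverse-engineering which multiplier to use when completing the square, i.e., recognising that the fourth diagonal block $P-\epsilon I$ encodes exactly the term needed to absorb the cross term, and that the third block $-\epsilon I$ together with the $\epsilon\rho(GH)^{\top}$ coupling encodes the Lipschitz constraint with an S-procedure-type multiplier equal to $\epsilon$. Once this correspondence is identified, every step is a routine Schur-complement or Young-inequality manipulation, with the only care needed being to keep the signs straight throughout.
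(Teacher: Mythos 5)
Your proof is correct. Your Schur-complement reduction of the $4 \times 4$ LMI to the single inequality $(\star)$, and your use of the Lipschitz bound on $\rho G\sin(H\cdot)$ as a quadratic constraint with multiplier $\epsilon$, match the paper's argument exactly. Where you differ is in how the quadratic-form inequality is converted into the $\mathcal{KL}$-estimate: the paper phrases the step as an S-procedure (taking \eqref{eq:lmi2} $-\,\epsilon\,\cdot$\eqref{eq:lmi1}, yielding \eqref{eq:lmi3}) and then invokes an external reference, \cite[Theorem~1.4]{bof2018lyapunov}, to pass from the pointwise Lyapunov-difference condition $(A_o\Delta z + w)^\top P(A_o\Delta z + w) \leq \theta\,\Delta z^\top P\Delta z$ to the $\mathcal{KL}$-bound. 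You instead complete the square directly via Young's inequality with the cleverly chosen weight $M = (\epsilon I - P)^{-1}$ (well-defined by the first LMI), which absorbs the cross term and collapses the $\psi$-quadratic to exactly $\epsilon\,\psi^\top\psi$; combining this with $(\star)$ gives $V(\Delta z(k+1)) \leq \theta\,V(\Delta z(k))$, and iterating produces the explicit estimate $\beta_2(s,t) = \sqrt{\lambda_{\max}(P)/\lambda_{\min}(P)}\;\theta^{t/2}s$. The two mechanisms are mathematically equivalent here (the S-procedure multiplier $\epsilon$ and the Young weight $M$ encode the same trade-off), but your version is self-contained, makes the role of each LMI block transparent, and supplies a concrete $\mathcal{KL}$-function rather than relying on a cited result.
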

\begin{proof}
Let $X = \rho GH$. Recall that for any $z, \hat{z} \in \mathbb{R}^2$,
\begin{equation} \label{eq:lmi0}
\| \rho G \sin(Hz) - \rho G \sin(H \hat{z}) \| \leq  \| X (z - \hat{z}) \|.
\end{equation}
Let $w = \rho G \sin(H z) - \rho G \sin(H \hat{z})$ and $\Delta z = z - \hat{z}$, \eqref{eq:lmi0} is the same as $w^\top w - \Delta z^\top X^\top X  \Delta z \leq 0$, that is,
\begin{equation} \label{eq:lmi1}
\begin{bmatrix}
\Delta z^\top & w^\top 
\end{bmatrix} 
\begin{bmatrix}
- X^\top X & 0 \\ 0 & I
\end{bmatrix} 
\begin{bmatrix}
\Delta z \\ w
\end{bmatrix} \leq 0.
\end{equation}
Let $A_o = A - P^{-1} ZC$. Eq.~\eqref{eq:coro-uioc-obs} holds if there exists $\theta \in (0, 1)$ such that $(A_o \Delta z + w)^\top P (A_o \Delta z+w) - \theta \Delta z^\top P \Delta z \leq 0$ \cite[Theorem~1.4]{bof2018lyapunov}, i.e.,
\begin{equation}\label{eq:lmi2}
\begin{bmatrix}
\Delta z^\top & w^\top 
\end{bmatrix}
\begin{bmatrix}
A_o^\top P A_o - \theta P  & A_o^\top P \\ PA_o &  P
\end{bmatrix}
\begin{bmatrix}
\Delta z \\ w
\end{bmatrix} \leq 0.
\end{equation}
From \eqref{eq:lmi1}, \eqref{eq:lmi2} holds if there exists $\epsilon > 0$ such that
\begin{equation} \label{eq:lmi3}
\begin{bmatrix}
A_o^\top P A_o - \theta P  & A_o^\top P \\ PA_o &  P
\end{bmatrix} - 
\epsilon \begin{bmatrix}
- X^\top X & 0 \\ 0 & I
\end{bmatrix}  \leq 0.
\end{equation}
Assuming that $P - \epsilon I < 0$ and $\epsilon >0$, we claim that \eqref{eq:lmi3} is equivalent to
\begin{equation} \label{eq:lmi4}
\begin{bmatrix}
-\theta P & A_o^\top P & \epsilon X^\top & A_o^\top P \\
PA_o & -P & 0 & 0 \\
\epsilon X & 0 & -\epsilon I & 0 \\
PA_o & 0 & 0 & P - \epsilon I
\end{bmatrix}  \leq 0.
\end{equation}
To see this, under the assumptions $P - \epsilon I < 0$ and $\epsilon > 0$, the Schur complement shows that \eqref{eq:lmi3} is equivalent to $A_o^\top P A_o - \theta P  + \epsilon X^\top X  - A_o^\top P ( P - \epsilon I)^{-1} PA_o \leq 0$. The same condition is obtained by applying the Schur complement twice to \eqref{eq:lmi4}.
\end{proof}

\section{}
\label{app:lemma-holder}
\begin{lemma} \label{lemma:holder-like}
For any $n \times n$ Hermitian matrices $A$ and $B$, we have $|{\rm Tr}(AB)| \leq \sigma_{\max}(A) \|B\|_1$.
\end{lemma}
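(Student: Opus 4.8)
The plan is to reduce the claim to the spectral decomposition of the Hermitian matrix $B$ together with the variational characterization of $\sigma_{\max}(A)$ for Hermitian $A$; this is the special case $p=\infty$, $q=1$ of the matrix H\"older inequality $|{\rm Tr}(AB)| \leq \|A\|_p \|B\|_q$. First I would write $B = \sum_{i=1}^n \lambda_i v_i v_i^\dagger$, where $\{v_i\}_{i=1}^n$ is an orthonormal eigenbasis of $B$ and $\lambda_i \in \mathbb{R}$ the corresponding eigenvalues (the spectral theorem applies since $B$ is Hermitian). Because $B$ is Hermitian, its singular values are $|\lambda_i|$, so $\|B\|_1 = {\rm Tr}(\sqrt{B^\dagger B}) = \sum_{i=1}^n |\lambda_i|$.

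Next I would expand the trace using linearity and the cyclic property: ${\rm Tr}(AB) = \sum_{i=1}^n \lambda_i\, {\rm Tr}(A v_i v_i^\dagger) = \sum_{i=1}^n \lambda_i\, v_i^\dagger A v_i$, and then apply the triangle inequality to get $|{\rm Tr}(AB)| \leq \sum_{i=1}^n |\lambda_i|\, |v_i^\dagger A v_i|$. The key step is to bound $|v_i^\dagger A v_i| \leq \sigma_{\max}(A)$ for each unit vector $v_i$: since $A$ is Hermitian, its eigenvalues are real and its singular values are the absolute values of those eigenvalues, so $\sigma_{\max}(A)$ coincides with the operator (spectral) norm $\|A\|$, whence $|v_i^\dagger A v_i| \leq \|A\|\,\|v_i\|^2 = \sigma_{\max}(A)$. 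Substituting this into the previous display gives $|{\rm Tr}(AB)| \leq \sigma_{\max}(A) \sum_{i=1}^n |\lambda_i| = \sigma_{\max}(A)\,\|B\|_1$, which is the assertion.

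I do not anticipate a serious obstacle. The only points requiring a little care are: (i) confirming that Hermiticity of $A$ makes $\sigma_{\max}(A)$ equal to the spectral radius, so that the quadratic-form bound $|v^\dagger A v| \leq \sigma_{\max}(A)$ holds even when $A$ is indefinite; and (ii) confirming that Hermiticity of $B$ makes $\|B\|_1 = \sum_i |\lambda_i(B)|$. An alternative but essentially equivalent route is to invoke von Neumann's trace inequality $|{\rm Tr}(AB)| \leq \sum_i \sigma_i(A)\sigma_i(B)$ and then bound each $\sigma_i(A) \leq \sigma_{\max}(A)$ while summing $\sum_i \sigma_i(B) = \|B\|_1$; I would prefer the spectral-decomposition argument above since it is self-contained and keeps the role of Hermiticity explicit.
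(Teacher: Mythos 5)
Your argument is correct and is essentially the paper's own: both proofs spectrally decompose $B = \sum_i \lambda_i v_i v_i^\dagger$, reduce ${\rm Tr}(AB)$ to $\sum_i \lambda_i\, v_i^\dagger A v_i$, and bound each $|v_i^\dagger A v_i|$ by $\sigma_{\max}(A)$ (the paper phrases this via the min-max theorem applied to the diagonal entries of $U^\dagger A U$, which is the same bound in different notation). No substantive difference.
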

\begin{proof}
Let $B = \sum_{j=1}^{n} \lambda_j v_j v^\dagger_j$ be the spectral decomposition of $B$, where $\{ v_j \}$ forms an orthonormal basis for $\mathbb{R}^{n}$, where $\dagger$ is the adjoint. Let $\{e_j\}$ be the standard basis for $\mathbb{R}^{n}$. Then there exists a unitary matrix $U$ such that $v_j = U e_j$ for $1 \leq j \leq n$. Therefore,
$$\left|{\rm Tr}(AB) \right| = \left| \sum_{j=1}^{n} \lambda_j {\rm Tr} \left(e^\dagger_j U^\dagger A U e_j \right) \right| \leq \sum_{j=1}^{n} |\lambda_j| \left| (U^\dagger A U)_{jj} \right|,$$
where $X_{jj}$ is the $(j,j)$-th element of a matrix $X$. For any Hermitian matrix $A$, by the min-max theorem, we have 
$\lambda_{\min}(A) = \min_{\sqrt{x^\top x}=1} x^\top A x \leq A_{jj} \leq \lambda_{\max}(A) = \max_{\sqrt{x^\top x}=1} x^\top A x.$
Therefore, $|A_{jj}| \leq \max\{| \lambda_{\min}(A)|, |\lambda_{\max}(A)|\} = \sigma_{\max}(A)$. By unitary invariance of singular values, $\left| {\rm Tr}(AB) \right| \leq \sigma_{\max}(A) \sum_{j=1}^{n}|\lambda_j|  = \sigma_{\max}(A) \| B \|_1$.
\end{proof}

\addtolength{\textheight}{-18.2cm}

\bibliographystyle{IEEEtran}
\bibliography{sg_technical}

\begin{thebibliography}{10}
\providecommand{\url}[1]{#1}
\csname url@samestyle\endcsname
\providecommand{\newblock}{\relax}
\providecommand{\bibinfo}[2]{#2}
\providecommand{\BIBentrySTDinterwordspacing}{\spaceskip=0pt\relax}
\providecommand{\BIBentryALTinterwordstretchfactor}{4}
\providecommand{\BIBentryALTinterwordspacing}{\spaceskip=\fontdimen2\font plus
\BIBentryALTinterwordstretchfactor\fontdimen3\font minus
  \fontdimen4\font\relax}
\providecommand{\BIBforeignlanguage}[2]{{%
\expandafter\ifx\csname l@#1\endcsname\relax
\typeout{** WARNING: IEEEtran.bst: No hyphenation pattern has been}%
\typeout{** loaded for the language `#1'. Using the pattern for}%
\typeout{** the default language instead.}%
\else
\language=\csname l@#1\endcsname
\fi
#2}}
\providecommand{\BIBdecl}{\relax}
\BIBdecl

\bibitem{angeli2002lyapunov}
D.~Angeli, ``{A Lyapunov approach to incremental stability properties},''
  \emph{IEEE Transactions on Automatic Control}, vol.~47, no.~3, pp. 410--421,
  2002.

\bibitem{pavlov2005convergent}
A.~Pavlov, N.~Van De~Wouw, and H.~Nijmeijer, ``Convergent systems: analysis and
  synthesis,'' in \emph{Control and observer design for nonlinear finite and
  infinite dimensional systems}.\hskip 1em plus 0.5em minus 0.4em\relax
  Springer, 2005, pp. 131--146.

\bibitem{pavlov2008convergent}
A.~Pavlov and N.~van~de Wouw, ``{Convergent discrete-time nonlinear systems:
  the case of PWA systems},'' in \emph{2008 American Control Conference}.\hskip
  1em plus 0.5em minus 0.4em\relax IEEE, 2008, pp. 3452--3457.

\bibitem{lohmiller1998contraction}
W.~Lohmiller and J.-J.~E. Slotine, ``On contraction analysis for non-linear
  systems,'' \emph{Automatica}, vol.~34, no.~6, pp. 683--696, 1998.

\bibitem{tran2018convergence}
D.~N. Tran, B.~S. R{\"u}ffer, and C.~M. Kellett, ``Convergence properties for
  discrete-time nonlinear systems,'' \emph{IEEE Transactions on Automatic
  Control}, vol.~64, no.~8, pp. 3415--3422, 2018.

\bibitem{pavlov2011steady}
A.~Pavlov and N.~van~de Wouw, ``Steady-state analysis and regulation of
  discrete-time nonlinear systems,'' \emph{IEEE transactions on automatic
  control}, vol.~57, no.~7, pp. 1793--1798, 2011.

\bibitem{pavlov2006uniform}
A.~Pavlov, N.~van~de Wouw, and H.~Nijmeijer, \emph{Uniform Output Regulation of
  Nonlinear Systems: A Convergent Dynamics Approach}, ser. Systems \& Control:
  Foundations \& Applications.\hskip 1em plus 0.5em minus 0.4em\relax
  Birkh{\"a}user Boston, 2006.

\bibitem{pham2007stable}
Q.-C. Pham and J.-J. Slotine, ``Stable concurrent synchronization in dynamic
  system networks,'' \emph{Neural networks}, vol.~20, no.~1, pp. 62--77, 2007.

\bibitem{tanaka2019recent}
G.~Tanaka \emph{et~al.}, ``Recent advances in physical reservoir computing: A
  review,'' \emph{Neural Networks}, 2019.

\bibitem{nakajima2020reservoir}
K.~Nakajima and I.~Fischer, \emph{Reservoir Computing: Theory, Physical
  Implementations, and Applications}.\hskip 1em plus 0.5em minus 0.4em\relax
  Springer Singapore, 2020.

\bibitem{grigoryeva2018universal}
L.~Grigoryeva and J.-P. Ortega, ``Universal discrete-time reservoir computers
  with stochastic inputs and linear readouts using non-homogeneous state-affine
  systems,'' \emph{The Journal of Machine Learning Research}, vol.~19, no.~1,
  pp. 892--931, 2018.

\bibitem{jaeger2004harnessing}
H.~Jaeger and H.~Haas, ``Harnessing nonlinearity: Predicting chaotic systems
  and saving energy in wireless communication,'' \emph{Science}, vol. 304, no.
  5667, pp. 78--80, 2004.

\bibitem{pathak2018model}
J.~Pathak \emph{et~al.}, ``Model-free prediction of large spatiotemporally
  chaotic systems from data: A reservoir computing approach,'' \emph{Physical
  review letters}, vol. 120, no.~2, p. 024102, 2018.

\bibitem{larger2017high}
L.~Larger, A.~Bayl{\'o}n-Fuentes, R.~Martinenghi, V.~S. Udaltsov, Y.~K. Chembo,
  and M.~Jacquot, ``High-speed photonic reservoir computing using a
  time-delay-based architecture: Million words per second classification,''
  \emph{Physical Review X}, vol.~7, no.~1, p. 011015, 2017.

\bibitem{canaday2018rapid}
D.~Canaday, A.~Griffith, and D.~J. Gauthier, ``Rapid time series prediction
  with a hardware-based reservoir computer,'' \emph{Chaos: An Interdisciplinary
  Journal of Nonlinear Science}, vol.~28, no.~12, p. 123119, 2018.

\bibitem{zhou2020reservoir}
P.~Zhou, N.~R. McDonald, A.~J. Edwards, L.~Loomis, C.~D. Thiem, and J.~S.
  Friedman, ``Reservoir computing with planar nanomagnet arrays,'' \emph{arXiv
  preprint arXiv:2003.10948}, 2020.

\bibitem{galan2020tropical}
F.~Gal{\'a}n-Prado, J.~Font-Rossell{\'o}, and J.~L. Rossell{\'o}, ``Tropical
  reservoir computing hardware,'' \emph{IEEE Transactions on Circuits and
  Systems II: Express Briefs}, 2020.

\bibitem{soriano2019optoelectronic}
M.~C. Soriano, P.~Massuti-Ballester, J.~Yelo, and I.~Fischer, ``Optoelectronic
  reservoir computing using a mixed digital-analog hardware implementation,''
  in \emph{International Conference on Artificial Neural Networks}.\hskip 1em
  plus 0.5em minus 0.4em\relax Springer, 2019, pp. 170--174.

\bibitem{nakajima2020physical}
K.~Nakajima, ``Physical reservoir computing---an introductory perspective,''
  \emph{Japanese Journal of Applied Physics}, vol.~59, no.~6, p. 060501, 2020.

\bibitem{Preskill18}
J.~Preskill, ``{Quantum Computing in the NISQ era and beyond},''
  \emph{Quantum}, vol.~2, p.~79, 2018.

\bibitem{IBMQ}
{IBM} {Q}uantum {E}xperience. \url{https://www.ibm.com/quantum-computing/}.

\bibitem{fujii2017harnessing}
K.~Fujii and K.~Nakajima, ``Harnessing disordered-ensemble quantum dynamics for
  machine learning,'' \emph{Physical Review Applied}, vol.~8, no.~2, p. 024030,
  2017.

\bibitem{CN19}
J.~Chen and H.~I. Nurdin, ``Learning nonlinear input--output maps with
  dissipative quantum systems,'' \emph{Quantum Information Processing},
  vol.~18, no.~7, p. 198, 2019.

\bibitem{JNY19}
J.~Chen, H.~I. Nurdin, and N.~Yamamoto, ``Single-input single-output nonlinear
  system identification and signal processing on near-term quantum computers,''
  in \emph{Proceedings of the 2019 IEEE Conference on Decision and Control
  (CDC)}, December 2019, pp. 401--406.

\bibitem{chen2020temporal}
------, ``Temporal information processing on noisy quantum computers,''
  \emph{Phys. Rev. Applied}, vol.~14, p. 024065, Aug 2020.

\bibitem{shamma1993fading}
J.~S. Shamma and R.~Zhao, ``Fading-memory feedback systems and robust
  stability,'' \emph{Automatica}, vol.~29, no.~1, pp. 191--200, 1993.

\bibitem{jiang1997small}
Z.~P. Jiang and I.~Mareels, ``{A small-gain control method for nonlinear
  cascaded systems with dynamic uncertainties},'' \emph{IEEE Transactions on
  Automatic Control}, vol.~42, no.~3, pp. 292--308, 1997.

\bibitem{jiang1994small}
Z.~P. Jiang, A.~R. Teel, and L.~Praly, ``{Small-gain theorem for ISS systems
  and applications},'' \emph{Mathematics of Control, Signals and Systems},
  vol.~7, no.~2, pp. 95--120, 1994.

\bibitem{chen2005simplified}
Z.~Chen and J.~Huang, ``A simplified small gain theorem for time-varying
  nonlinear systems,'' \emph{IEEE Transactions on Automatic Control}, vol.~50,
  no.~11, pp. 1904--1908, 2005.

\bibitem{zhu2006small}
M.~Zhu and J.~Huang, ``Small gain theorem with restrictions for uncertain
  time-varying nonlinear systems,'' \emph{Communications in Information \&
  Systems}, vol.~6, no.~2, pp. 115--136, 2006.

\bibitem{sontag2002small}
E.~D. Sontag and B.~Ingalls, ``A small-gain theorem with applications to
  input/output systems, incremental stability, detectability, and
  interconnections,'' \emph{Journal of the Franklin Institute}, vol. 339,
  no.~2, pp. 211--229, 2002.

\bibitem{jiang2004nonlinear}
Z.~P. Jiang, Y.~Lin, and Y.~Wang, ``Nonlinear small-gain theorems for
  discrete-time feedback systems and applications,'' \emph{Automatica},
  vol.~40, no.~12, pp. 2129--2136, 2004.

\bibitem{jiang2001input}
Z.~P. Jiang and Y.~Wang, ``Input-to-state stability for discrete-time nonlinear
  systems,'' \emph{Automatica}, vol.~37, no.~6, pp. 857--869, 2001.

\bibitem{Ylin}
E.~D.~S. Y.~Lin and Y.~Wang, ``Recent results on lyapunov theoretic techniques
  for nonlinear stability,'' \emph{Tech. Rep. SYCON-93-09.}

\bibitem{jiang2005remarks}
Z.~P. Jiang, Y.~Lin, and Y.~Wang, ``Remarks on input-to-output stability for
  discrete time systems,'' \emph{IFAC Proceedings Volumes}, vol.~38, no.~1, pp.
  306--311, 2005.

\bibitem{ibrir2005observer}
S.~Ibrir, W.~F. Xie, and C.-Y. Su, ``{Observer-based control of discrete-time
  Lipschitzian non-linear systems: application to one-link flexible joint
  robot},'' \emph{International Journal of Control}, vol.~78, no.~6, pp.
  385--395, 2005.

\bibitem{ibrir2008novel}
S.~Ibrir and S.~Diopt, ``{Novel LMI conditions for observer-based stabilization
  of Lipschitzian nonlinear systems and uncertain linear systems in
  discrete-time},'' \emph{Applied Mathematics and Computation}, vol. 206,
  no.~2, pp. 579--588, 2008.

\bibitem{johansen1993constructing}
T.~A. Johansen and B.~A. Foss, ``{Constructing NARMAX models using ARMAX
  models},'' \emph{International journal of control}, vol.~58, no.~5, pp.
  1125--1153, 1993.

\bibitem{boyd1985fading}
S.~Boyd and L.~Chua, ``{Fading memory and the problem of approximating
  nonlinear operators with Volterra series},'' \emph{IEEE Transactions on
  circuits and systems}, vol.~32, no.~11, pp. 1150--1161, 1985.

\bibitem{giri2010block}
F.~Giri and E.~Bai, \emph{{Block-oriented Nonlinear System Identification}},
  ser. Lecture Notes in Control and Information Sciences.\hskip 1em plus 0.5em
  minus 0.4em\relax Springer London, 2010.

\bibitem{schoukens2017identification}
M.~Schoukens and K.~Tiels, ``Identification of block-oriented nonlinear systems
  starting from linear approximations: A survey,'' \emph{Automatica}, vol.~85,
  pp. 272--292, 2017.

\bibitem{lin1996smooth}
Y.~Lin, E.~D. Sontag, and Y.~Wang, ``{A smooth converse Lyapunov theorem for
  robust stability},'' \emph{SIAM Journal on Control and Optimization},
  vol.~34, no.~1, pp. 124--160, 1996.

\bibitem{isidori2017lectures}
A.~Isidori, \emph{Lectures in feedback design for multivariable systems}.\hskip
  1em plus 0.5em minus 0.4em\relax Springer, 2017, vol.~3.

\bibitem{kreyszig1978introductory}
E.~Kreyszig, \emph{Introductory functional analysis with applications}.\hskip
  1em plus 0.5em minus 0.4em\relax John Wiley \& Sons, 1978.

\bibitem{ibrir2007circle}
S.~Ibrir, ``Circle-criterion approach to discrete-time nonlinear observer
  design,'' \emph{Automatica}, vol.~43, no.~8, pp. 1432--1441, 2007.

\bibitem{ljung1998system}
L.~Ljung, \emph{System Identification: Theory for the User}.\hskip 1em plus
  0.5em minus 0.4em\relax Pearson Education, 1998.

\bibitem{kumpati1990identification}
S.~N. Kumpati and P.~Kannan, ``Identification and control of dynamical systems
  using neural networks,'' \emph{IEEE Transactions on neural networks}, vol.~1,
  no.~1, pp. 4--27, 1990.

\bibitem{NC10}
M.~Nielsen and I.~Chuang, \emph{Quantum Computation and Quantum Information:
  10th Anniversary Edition}.\hskip 1em plus 0.5em minus 0.4em\relax Cambridge
  University Press, 2010.

\bibitem{nurdin2009coherent}
H.~I. Nurdin, M.~R. James, and I.~R. Petersen, ``{Coherent quantum LQG
  control},'' \emph{Automatica}, vol.~45, no.~8, pp. 1837--1846, 2009.

\bibitem{amini2013feedback}
H.~Amini, R.~A. Somaraju, I.~Dotsenko, C.~Sayrin, M.~Mirrahimi, and P.~Rouchon,
  ``Feedback stabilization of discrete-time quantum systems subject to
  non-demolition measurements with imperfections and delays,''
  \emph{Automatica}, vol.~49, no.~9, pp. 2683--2692, 2013.

\bibitem{CKS17}
J.~Combes, J.~Kerckhoff, and M.~Sarovar, ``The {SLH} framework for modeling
  quantum input-output networks,'' \emph{Adv. Phys. X}, vol.~2, no. 784, 2017.

\bibitem{NY17}
H.~I. Nurdin and N.~Yamamoto, \emph{Linear Dynamical Quantum Systems: Analysis,
  Synthesis, and Control}, ser. Communications and Control Engineering.\hskip
  1em plus 0.5em minus 0.4em\relax Springer, 2017.

\bibitem{MR09}
M.~Mirrahimi and P.~Rouchon, ``Singular perturbations and
  {L}indblad-{K}ossakowski differential equations,'' \emph{IEEE Transactions
  Automat. Control}, vol.~54, no.~6, pp. 1325--1329, 2009.

\bibitem{perez2006contractivity}
D.~Perez-Garcia, M.~M. Wolf, D.~Petz, and M.~B. Ruskai, ``{Contractivity of
  positive and trace-preserving maps under $L_p$ norms},'' \emph{Journal of
  Mathematical Physics}, vol.~47, no.~8, p. 083506, 2006.

\bibitem{besselink2011model}
B.~Besselink, N.~van~de Wouw, and H.~Nijmeijer, ``Model reduction for a class
  of convergent nonlinear systems,'' \emph{IEEE transactions on automatic
  control}, vol.~57, no.~4, pp. 1071--1076, 2011.

\bibitem{bof2018lyapunov}
N.~Bof, R.~Carli, and L.~Schenato, ``Lyapunov theory for discrete time
  systems,'' \emph{arXiv preprint arXiv:1809.05289}, 2018.

\end{thebibliography}

\end{document}